\newtheorem{myDef}{Definition}
\newtheorem{myTheor}{Theorem}
\newtheorem{myLemma}{Lemma}
\newtheorem{myExp}{Example}
\newtheorem{myProposition}{Proposition}
\newtheorem{myCorollary}{Corollary}
\newcommand{\erfc}{\mathrm{erfc}}
\begin{document}

\title{Finite Dimensional Lattice Codes with \\ Self Error-Detection and Retry Decoding}

\author{Jiajie Xue,~\IEEEmembership{Graduate student member,~IEEE,} and Brian M. Kurkoski,~\IEEEmembership{Member,~IEEE}
        % <-this % stops a space
\thanks{The material in this paper was presented in part at the 2022 IEEE International Symposium on Information Theory \cite{xue2022lower}.}% <-this % stops a space
\thanks{The author are with the Graduate School of Advanced Science and Technology, Japan Advanced Institute of Science and Technology, Nomi 923-1292, Japan (e-mail: xue.jiajie, kurkoski@jaist.ac.jp).}
\thanks{This work was supported by JSPS Kakenhi Grant Number JP 21H04873.}}
%\date{}
% The paper headers
%\markboth{Journal of \LaTeX\ Class Files,~Vol.~14, No.~8, August~2021}%
%{Shell \MakeLowercase{\textit{et al.}}: A Sample Article Using IEEEtran.cls for IEEE Journals}

%\IEEEpubid{0000--0000/00\$00.00~\copyright~2021 IEEE}
% Remember, if you use this you must call \IEEEpubidadjcol in the second
% column for its text to clear the IEEEpubid mark.

\maketitle

\begin{abstract}
Lattice codes with optimal decoding coefficient are capacity-achieving when dimension $N \rightarrow \infty$. In communications systems, finite dimensional lattice codes are considered, where the optimal decoding coefficients may still fail decoding even when $R< C$. This paper presents a new retry decoding scheme for finite dimensional lattice-based transmissions. When decoding errors are detected, the receiver is allowed to adjust the value of decoding coefficients and retry decoding, instead of requesting a re-transmission immediately which causes high latency. This scheme is considered for both point-to-point single user transmission and compute-forward (CF) relaying with power unconstrained relays, by which a lower word error rate (WER) is achieved than conventional one-shot decoding with optimal coefficients. A lattice/lattice code construction, called CRC-embedded lattice/lattice code, is presented to provide physical layer error detection to enable retry decoding. For CF relaying, a shaping lattice design is given so that the decoder is able to detect errors from CF linear combinations without requiring individual users' messages. The numerical results show gains of up to 1.31 dB and 1.08 dB at error probability $10^{-5}$ for a 2-user CF relay using 128- and 256-dimensional lattice codes with optimized CRC length and 2 decoding trials in total.

\end{abstract}

\begin{IEEEkeywords}
Lattices, lattice codes, AWGN channel, compute-forward, CRC codes, word error probability, channel coding.
\end{IEEEkeywords}

\section{Introduction}  \label{sec_introduction}
\IEEEPARstart{A}{} lattice $\Lambda$ is a discrete additive subgroup of the real number space $\mathbb{R}^N$. Since $\Lambda$ is an infinite constellation and is power unconstrained, a lattice code $\mathcal{C}= \Lambda_c/ \Lambda_s$ may be constructed using a coding lattice $\Lambda_c$ and a shaping lattice $\Lambda_s$ to satisfy power constraints for wireless communications. Lattice codes can be seen as a coded modulation scheme for power-constrained communication systems, which has shaping gain to reduce transmission power. Theoretical results on asymptotic dimensional lattice codes show that lattice codes are capacity-achieving when using maximum likelihood (ML) decoding by choosing shaping region as a $N$-dimensional thin shell \cite{de1989some} \cite{linder1993corrected} and later extended to the whole $N$-dimensional sphere \cite{urbanke1998lattice}. More significantly, it is shown that the capacity can also be achieved using low complexity lattice decoding if the received message is scaled by an MMSE factor, denoted as $\alpha_{MMSE}$, before decoding \cite{erez2004achieving}. Besides these results for the asymptotic case, properties of practical and finite dimensional lattices/lattice codes are also widely studied. 
Conway and Sloane's book \cite{conway1993sphere} describes a series of well-known low dimensional lattices. For higher dimensional lattices, recent research interests are on designing structured lattice codes for wireless communications, such as low density lattice codes \cite{sommer2008low}, Construction D/D' lattice codes based on BCH codes \cite{matsumine2018construction}, polar codes \cite{liu2018construction} \cite{ludwiniananda2021design} and LDPC codes \cite{zhou2022construction}, which give the excellent error performance and low decoding complexity. 

Because of their linearity, lattice codes are suitable for physical layer network coding (PLNC). Compute-forward (CF) relaying \cite{nazer2011compute} is a multiple access relaying technique which utilizes the linearity of lattice codes for PLNC and can significantly improve the network throughput. Instead of multi-user detection, CF relay applies a single-user decoder to estimate a linear combination of users' messages. Previous research studied the performance on various of network topologies in \cite{zhu2016gaussian} \cite{hasan2017practical} from theoretical perspective and code designs using binary LDPC codes \cite{sula2018compute} and Construction A lattices \cite{ordentlich2011practical}. 

\IEEEpubidadjcol

\subsection{Problem statement and motivation}
This paper considers finite-dimensional lattice-based communications for point-to-point single user transmission and CF relaying. In theoretical studies, lattice decoding with scaling factor $\alpha_{MMSE}$ is capacity achieving when the dimension is asymptotically large. For finite dimensional lattice codes with $R< C$, the probability of decoding error is non-zero and $\alpha_{MMSE}$ may still fail decoding. Conventionally, the receiver requests a re-transmission for a failed decoding, which causes high latency. 
A similar situation is faced by CF relaying. A decoding coefficient set $\{\mathbf{a}, \alpha\}$ is selected at CF relay, which consists of an integer vector $\mathbf{a}$ as coefficients of the linear combination and a scaling factor $\alpha$. A decoding error happens when the relay cannot decode a correct linear combination with given coefficient set $\{\mathbf{a}, \alpha\}$. Since a linear combination includes multiple users' messages, a stand-alone relay may not be able to perform error detection from one-shot decoding and may forward error-containing messages into network causing decoding failure at the destination.

The motivation of this paper is to investigate that, when optimal coefficient(s) for single user case or CF relaying fail decoding, if the receiver can improve the error performance using retry decoding after adjusting the value of decoding coefficients without requesting re-transmission. In order to implement retry decoding and prevent forwarding erroneously decoded messages, a lattice construction is also studied which provides physical layer error detection ability and is applicable to both single user case and CF relaying.
For CF relaying, the lattice construction should provide functional error detection for linear combinations at a stand-alone relay even without knowledge of individual users' messages.

\subsection{Contributions}
The contributions of this paper are summarized into two parts. First, in order to improve error performance, we give a retry decoding scheme which adjusts the values of decoding coefficients after the current coefficients have failed, for single user (SU) transmission and CF relaying. We show that, even though $\alpha_{MMSE}$ \cite{erez2004achieving} and CF coefficient set $\{\mathbf{a}, \alpha\}$ derived in \cite{nazer2011compute} give optimal error performance for one-shot decoding, space for improvement still exists by using retry decoding, especially for low, e.g. dimension $N= 8$, and medium dimensional, e.g. $N= 128$, lattice codes. The coefficient candidates are listed based on probability of correct decoding given all previous candidates failed for the SU case; or computation rate for CF relaying. This ensures the coefficient candidates are tested in the order of reliability to reduce the number of retries. For the SU case, an offline algorithm is given to find a finite-length candidate list by using a genie-aided exhaustive search decoder. Since the candidate search algorithm is performed offline, the complexity of exhaustive search does not affect the implementation. A lower bound on error probability is derived for this decoder by extending the finite-length list to the set of all real numbers. 

Second, we propose a new lattice construction technique which adds physical layer error detection to any existing lattices. Error detection is implemented by restricting the least significant bits (LSB) of lattice uncoded messages $\mathbf{b}_{LSB}$ using a binary linear block code $\mathcal{C}_b$. 
For practical design of $\mathcal{C}_b$, cyclic redundancy check (CRC) codes, which are widely used for error detection in systems, are mainly considered in this paper, named CRC-embedded lattice/lattice code.
The construction of CRC-integrated lattice codes for error detection, to the best of our knowledge, has not been studied. The CRC-embedded lattice code is valid for both SU transmission and CF relaying. For CF relaying, a condition on shaping lattice design needs to be satisfied in order to detect errors from linear combinations without knowledge of individual users' messages. The error detection capability of the embedded CRC code is evaluated by the probability of undetected error with respect to CRC length $l$. As the number of CRC parity bits increases, the CRC-embedded lattice code has better error detection capability while a larger SNR penalty is suffered. To balance this trade-off, CRC length optimization is given to maximize the SNR gain for a target error rate, which is semi-analytical and does not require a search over the CRC length. 
An implementation of CRC-embedded lattice codes with retry decoding is given for the SU case, using $E_8$ and $BW_{16}$ lattice codes, and CF relaying, using construction D polar code lattice with dimension $N= 128, 256$. The benefit of retry decoding is illustrated along with the optimized CRC length. A more significant gain is observed for CF relaying than the SU case. For a 2-user CF relay, 1.31 dB and 1.08 dB gain are achieved for equation error rate (EER) of CF linear combination being $10^{-5}$ by only adding one more decoding attempt when $N= 128, 256$, respectively. 

The organization of this paper is as follows. Section~\ref{sec_preliminary} reviews definitions, theoretical results and system models of lattice/lattice code transmission and CF relaying. Section~\ref{sec_su} and Section~\ref{sec_mac_CFrelay} describe the retry decoding scheme for single user transmission and CF relaying, respectively. In Section~\ref{sec_su} and~\ref{sec_mac_CFrelay}, we assume genie-aided error detection, for which the true message is known at decoder but only used for error detection. Section~\ref{sec_code_construct} gives the construction of the CRC-embedded lattice/lattice codes to provide error detection in practical decoding. {{} Section~\ref{sec_CRC_length_opt} gives the optimization of the CRC length. Section~\ref{sec_implement} gives numerical results on implementation of the CRC-embedded lattice codes with retry decoding and optimized CRC length.} Finally, Section~\ref{sec_conclusions} gives the conclusions of this paper with discussions of extension of this work.

\subsection{Notations}
Notations used in this paper are described. Scalar variables are denoted using italic font, e.g. code rate $R$ and lattice dimension $N$; vectors are denoted using lower-case bold, e.g. message vector $\mathbf{x}, \mathbf{y}$; matrices are denoted using upper-case bold, e.g. generator matrix $\mathbf{G}$. Vectors are column vectors, unless stated otherwise. The set of integers and real numbers are denoted using $\mathbb{Z}$ and $\mathbb{R}$, respectively. And an $N$-by-$N$ identity matrix is denoted as $\mathbf{I}_N$. 

%\newpage

\section{Preliminaries} \label{sec_preliminary}
This section gives the definitions of lattices and nested lattice codes which are used for channel coding scheme, and an overview of compute-forward relaying.

\subsection{Lattices}
\begin{myDef} \label{def_lattice}
    \rm (Lattice) An $N$-dimensional lattice $\Lambda$ is a discrete additive subgroup of the real number space $\mathbb{R}^N$.
Let $\mathbf{g}_1, \mathbf{g}_2, ... \mathbf{g}_N \in \mathbb{R}^N$ be $N$ linearly independent column vectors. The lattice $\Lambda$ is formed using generator matrix $\mathbf{G}= [\mathbf{g}_1, \mathbf{g}_2, ... \mathbf{g}_N] \in \mathbb{R}^{N \times N}$ by:
\begin{align}
    \Lambda= \{\mathbf{Gb} | \mathbf{b} \in \mathbb{Z}^N\}.
\end{align}
\end{myDef}
The lattice quantizer $Q_{\Lambda}(\cdot)$ finds the closest lattice point for given $\mathbf{y} \in \mathbb{R}^N$ as:
\begin{align} \label{equ_lattice_quantizer}
    \hat{\mathbf{x}} = Q_{\Lambda}(\mathbf{y}) & = \mathop{\arg\min}_{\mathbf{x} \in \Lambda} \|\mathbf{y}- \mathbf{x}\|^ 2.
\end{align}
In communications, $Q_{\Lambda}(\cdot)$ behaves as the lattice decoder, so we also denote it as
\begin{align}
    \hat{\mathbf{x}}= DEC_{\Lambda}(\mathbf{y}).
\end{align}
Lattice modulo $\bmod\ \Lambda$ is given as:
\begin{align} \label{equ_lattice_mod}
        \mathbf{y} \bmod \Lambda & = \mathbf{y}- Q_{\Lambda}(\mathbf{y}). 
\end{align}

\begin{myDef}
    \rm (Fundamental region) Given a lattice $\Lambda$, a region $\mathcal{F}$ is a fundamental region of $\Lambda$ if: for any distinct lattice points $\mathbf{x}_i \neq \mathbf{x}_j$, $(\mathcal{F}+ \mathbf{x}_i) \cap (\mathcal{F}+ \mathbf{x}_j)= \emptyset$ and the real number space can be covered as $\mathbb{R}^N= \bigcup_{\mathbf{x} \in \Lambda} \mathcal{F}+ \mathbf{x}$. 
\end{myDef}
The fundamental region of a lattice is not unique. The Voronoi region $\mathcal{V}$ and hypercube region $\mathcal{H}$ are considered in this paper and defined respectively as
\begin{align}
    \mathcal{V}(\mathbf{x})= \{\mathbf{y} \in \mathbb{R}^N\ |\ Q_{\Lambda}(\mathbf{y})= \mathbf{x}\},
\end{align}
and, for $\mathbf{G}$ having triangular form with diagonal elements $g_{i, i}$ and index $i=1, 2, \cdots, N$,
\begin{align}
    \mathcal{H}(\mathbf{x})= \{\mathbf{y} \in \mathbb{R}^N | -\frac{g_{i,i}}{2}+ x_{i} \leq y_i < \frac{g_{i,i}}{2}+ x_{i}\}.
\end{align}
The hypercube quantizer $Q_{\mathcal{H}}(\cdot)$ is defined to find the lattice point with respect to $\mathcal{H}$ that $\mathbf{y} \in \mathbb{R}^N$ belongs to. The volume of fundamental region $\mathcal{F}$ is given as:
\begin{align} \label{equ_V_fundamental_region}
    V(\Lambda)= |\det(\mathbf{G})|,
\end{align}
which is independent of the shape of $\mathcal{F}$.

The covering sphere and effective sphere of lattices are defined with respect to the Voronoi region $\mathcal{V}$. The covering sphere $\mathcal{S}_c$ with radius $r_c$ is the sphere of minimal radius that can cover the whole Voronoi region $\mathcal{V}$, that is $\mathcal{V} \subseteq \mathcal{S}_c$. Note that $\mathcal{V} \subset \mathcal{S}_c$ is satisfied for finite dimensional lattices.
The effective sphere $\mathcal{S}_e$ with radius $r_e$ is the sphere having volume $V(\mathcal{S}_e)= V(\Lambda)$, where the volume of the $N$-sphere $\mathcal{S}_e$ is
\begin{align}
    V(\mathcal{S}_e)= \frac{\pi^{N/ 2} r_e^N}{\Gamma(\frac{N}{2}+ 1)},
\end{align}
and $\Gamma(\cdot)$ is the gamma function. The relationship among $\mathcal{V}$, $\mathcal{S}_c$ and $\mathcal{S}_e$ for $N=2$ is shown in Fig.~\ref{fig_eff_cov_sphere}.

\begin{figure}[t]
    \centering
    \includegraphics[scale=0.5]{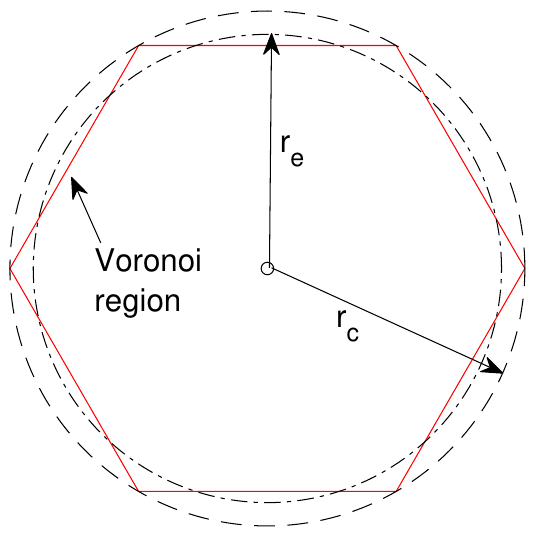}
    \caption{Relationship among lattice Voronoi region, covering sphere (dashed line) and effective sphere (dotted-dashed line).}
    \label{fig_eff_cov_sphere}
\end{figure}

\subsection{Nested lattice codes} \label{sec_nest_lattice}
A lattice $\Lambda$ is an infinite set and does not satisfy the power constraint needed for practical wireless communications. Next we give a review of nested lattice codes (or for short lattice codes), which have a power constraint. 
\begin{myDef}
    \rm (Nested lattice codes) Let two lattices $\Lambda_c$ and $\Lambda_s$ satisfy $\Lambda_s \subseteq \Lambda_c$ and form a quotient group $\Lambda_c / \Lambda_s$. Using a fundamental region $\mathcal{F}_s$ of $\Lambda_s$, a nested lattice code $\mathcal{C}= \Lambda_c/ \Lambda_s$ is:
    \begin{align} \label{equ_def_nestlattice}
        \mathcal{C}= \Lambda_c \cap \mathcal{F}_s.
    \end{align}
\end{myDef}
The fine lattice $\Lambda_c$ is called the coding lattice and the coarse lattice $\Lambda_s$ is called the shaping lattice. Let $\mathbf{G}_c$ and $\mathbf{G}_s$ be generator matrices of $\Lambda_c$ and $\Lambda_s$, respectively. Encoding $\mathcal{C}$ is to map an uncoded message $\mathbf{b} \in \mathbb{Z}^N$ to the codebook $\mathcal{C}$ as:
\begin{align} \label{equ_latticeC_enc}
    \mathbf{x}= ENC(\mathbf{b})= \mathbf{G}_c \mathbf{b} \bmod \Lambda_s= \mathbf{G}_c \mathbf{b}- Q_{\Lambda_s}(\mathbf{G}_c \mathbf{b}).
\end{align}
The $Q_{\Lambda_s}(\cdot)$ in \eqref{equ_latticeC_enc} is a lattice quantizer of $\Lambda_s$ for which there exists an $\mathbf{s} \in \mathbb{Z}^N$ such that $Q_{\Lambda_s}(\mathbf{G}_c \mathbf{b})= \mathbf{G}_s \mathbf{s}$. %is selected based on the shaping region $\mathcal{F}_s$.
For a nested lattice code $\mathcal{C}= \Lambda_c/ \Lambda_s$, the generator $\mathbf{G}_c$ and $\mathbf{G}_s$ must satisfy the following lemma.
\begin{myLemma} \label{lemma_exist_M}
    \rm \cite[Chapter 8]{zamir2014lattice} A nested lattice code $\mathcal{C}= \Lambda_c/ \Lambda_s$ can be formed if and only if $\mathbf{M}= \mathbf{G}_c^{-1} \mathbf{G}_s$ is a matrix of integers.
\end{myLemma}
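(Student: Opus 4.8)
The plan is to observe that the phrase ``a nested lattice code $\mathcal{C}= \Lambda_c/ \Lambda_s$ can be formed'' is, by the definition of nested lattice codes, exactly the requirement that the shaping lattice nest inside the coding lattice, i.e.\ $\Lambda_s \subseteq \Lambda_c$. So the entire lemma reduces to the equivalence
\[
\Lambda_s \subseteq \Lambda_c \quad \Longleftrightarrow \quad \mathbf{M}= \mathbf{G}_c^{-1} \mathbf{G}_s \in \mathbb{Z}^{N \times N}.
\]
Before anything else I would note that, by Definition~\ref{def_lattice}, both $\mathbf{G}_c$ and $\mathbf{G}_s$ are square matrices whose columns are linearly independent, so each is invertible; in particular $\mathbf{G}_c^{-1}$ exists and $\mathbf{M}$ is well-defined. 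The key algebraic identity to carry throughout is $\mathbf{G}_s= \mathbf{G}_c \mathbf{M}$.

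For the ``only if'' direction I would assume $\Lambda_s \subseteq \Lambda_c$ and test it on the generating basis of $\Lambda_s$. For each standard basis vector $\mathbf{e}_j \in \mathbb{Z}^N$, the column $\mathbf{G}_s \mathbf{e}_j$ is a point of $\Lambda_s$, hence by hypothesis a point of $\Lambda_c$, so there is some $\mathbf{b}_j \in \mathbb{Z}^N$ with $\mathbf{G}_s \mathbf{e}_j= \mathbf{G}_c \mathbf{b}_j$. Left-multiplying by $\mathbf{G}_c^{-1}$ gives $\mathbf{M}\mathbf{e}_j= \mathbf{b}_j \in \mathbb{Z}^N$, i.e.\ the $j$-th column of $\mathbf{M}$ is an integer vector. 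Since this holds for every $j$, $\mathbf{M}$ is an integer matrix.

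For the ``if'' direction I would assume $\mathbf{M} \in \mathbb{Z}^{N \times N}$ and take an arbitrary $\mathbf{x} \in \Lambda_s$, written as $\mathbf{x}= \mathbf{G}_s \mathbf{s}$ for some $\mathbf{s} \in \mathbb{Z}^N$. Using $\mathbf{G}_s= \mathbf{G}_c \mathbf{M}$ I get $\mathbf{x}= \mathbf{G}_c (\mathbf{M}\mathbf{s})$, and because $\mathbf{M}\mathbf{s}$ is a product of an integer matrix with an integer vector, $\mathbf{M}\mathbf{s} \in \mathbb{Z}^N$. Hence $\mathbf{x} \in \Lambda_c$, and as $\mathbf{x}$ was arbitrary, $\Lambda_s \subseteq \Lambda_c$, so the nested code is well-defined.

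There is no genuinely hard step here; the proof is essentially a change-of-basis bookkeeping argument. The only point requiring care---and the place I would be explicit---is the reduction at the start: making clear that ``can be formed'' is synonymous with the set inclusion $\Lambda_s \subseteq \Lambda_c$ and that the full-rank (invertibility) hypothesis on the generator matrices is what legitimizes writing $\mathbf{M}= \mathbf{G}_c^{-1}\mathbf{G}_s$ in the first place.
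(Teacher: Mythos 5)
Your proof is correct. Note that the paper itself gives no proof of this lemma---it is stated with a citation to \cite[Chapter 8]{zamir2014lattice}---and your argument (columns of $\mathbf{G}_s$ lie in $\Lambda_c$ for the ``only if'' direction, $\mathbf{x}= \mathbf{G}_c(\mathbf{M}\mathbf{s})$ with $\mathbf{M}\mathbf{s} \in \mathbb{Z}^N$ for the ``if'' direction) is precisely the standard change-of-basis proof found in the cited reference, including the correct initial reduction of ``can be formed'' to the inclusion $\Lambda_s \subseteq \Lambda_c$.
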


A lattice code is constructed by selecting a coding lattice $\Lambda_c$ and a shaping lattice $\Lambda_s$. 
A technique named rectangular encoding \cite{kurkoski2018encoding} gives a method for lattice code design which allows selecting $\Lambda_c$ and $\Lambda_s$ separately for good error performance and low encoding/decoding complexity. By rectangular encoding, the each coordinate of uncoded messages can be defined independently over $b_i \in \{0, 1, 2, \cdots, M_i-1\}$ with positive integers $M_i$ for $i= 1, 2, \cdots, N$, rather than defining a same domain as in the self-similar lattice codes $\Lambda_c/ M\Lambda_c$. To have equal power allocation on each dimensions, we apply hypercube shaping for numerical simulations in this paper, which has equal edge length of shaping region at all dimensions and is applicable for the rectangular encoding. 
For a lattice code, the code rate and average per-dimensional power are defined as follows.
\begin{myDef}
    \rm (Code rate) The code rate of the nested lattice code $\mathcal{C}= \Lambda_c/ \Lambda_s$ with generator matrix $\mathbf{G}_c$ and $\mathbf{G}_s$, and $b_i \in \{0, 1, 2, \cdots, M_i-1\}$ for $i= 1, 2, \cdots, N$ is
    \begin{align} \label{equ_def_latticeR}
        R= \frac{1}{N} \log_2 \frac{|\det(\mathbf{G}_s)|}{|\det(\mathbf{G}_c)|}= \frac{1}{N} \log_2 \prod_{i= 1}^N M_i   (\rm{bits/dimension}).
    \end{align}
\end{myDef} 
\begin{myDef}
    \rm (Average power) The average per-dimensional power of a lattice code $\mathcal{C}$ is 
    \begin{align} \label{equ_average_power}
        P= \frac{1}{N \cdot 2^{NR}} \sum_{\mathbf{x} \in \mathcal{C}} \|\mathbf{x}\|^2.
    \end{align}
\end{myDef}

\subsection{System model for single user transmission} \label{sec_system_model_SU}
For single user transmission, the additive white Gaussian noise (AWGN) channel is considered in this paper. Let codeword $\mathbf{x} \in \mathcal{C}$ have average per-dimensional power $P$. The received message is 
\begin{align}
    \mathbf{y}= \mathbf{x}+ \mathbf{z},
\end{align}
where the Gaussian noise $\mathbf{z} \sim \mathcal{N}(\mathbf{0}, \sigma^2 \mathbf{I}_N)$. The signal-to-noise ratio (SNR) is defined as:
\begin{align} \label{equ_def_SNR}
    SNR= P/ \sigma^2.
\end{align}
The estimate of $\mathbf{x}$ is obtained by a lattice decoder of $\Lambda_c$ with a scaling factor $\alpha \in \mathbb{R}$ as:
\begin{align} \label{equ_lattice_dec}
    \hat{\mathbf{x}}= DEC_{\Lambda_c}(\alpha \mathbf{y}).
\end{align}
The word error rate (WER), that is the ratio of $\hat{\mathbf{x}}= \mathbf{x}$, is evaluated in numerical simulations.
In one-shot decoding, choose $\alpha$ to be the MMSE optimal coefficient \cite{erez2004achieving}:
\begin{align} \label{equ_alpha_MMSE}
    \alpha_{MMSE}= \frac{P}{P+ \sigma^2}. %P/ (P+ \sigma^2).
\end{align} 
The uncoded message is recovered by taking the inverse of \eqref{equ_latticeC_enc}, called the indexing function:
\begin{align} \label{equ_latticecode_index}
    \hat{\mathbf{b}}= \mathrm{index} (\hat{\mathbf{x}}).
\end{align}
Algorithms for encoding and indexing can be found in \cite[Section IV, V]{kurkoski2018encoding}. 

\subsection{Overview of Compute-Forward} \label{sec_overview_CF}
\begin{figure}
    \centering
    \includegraphics[width=0.9\linewidth]{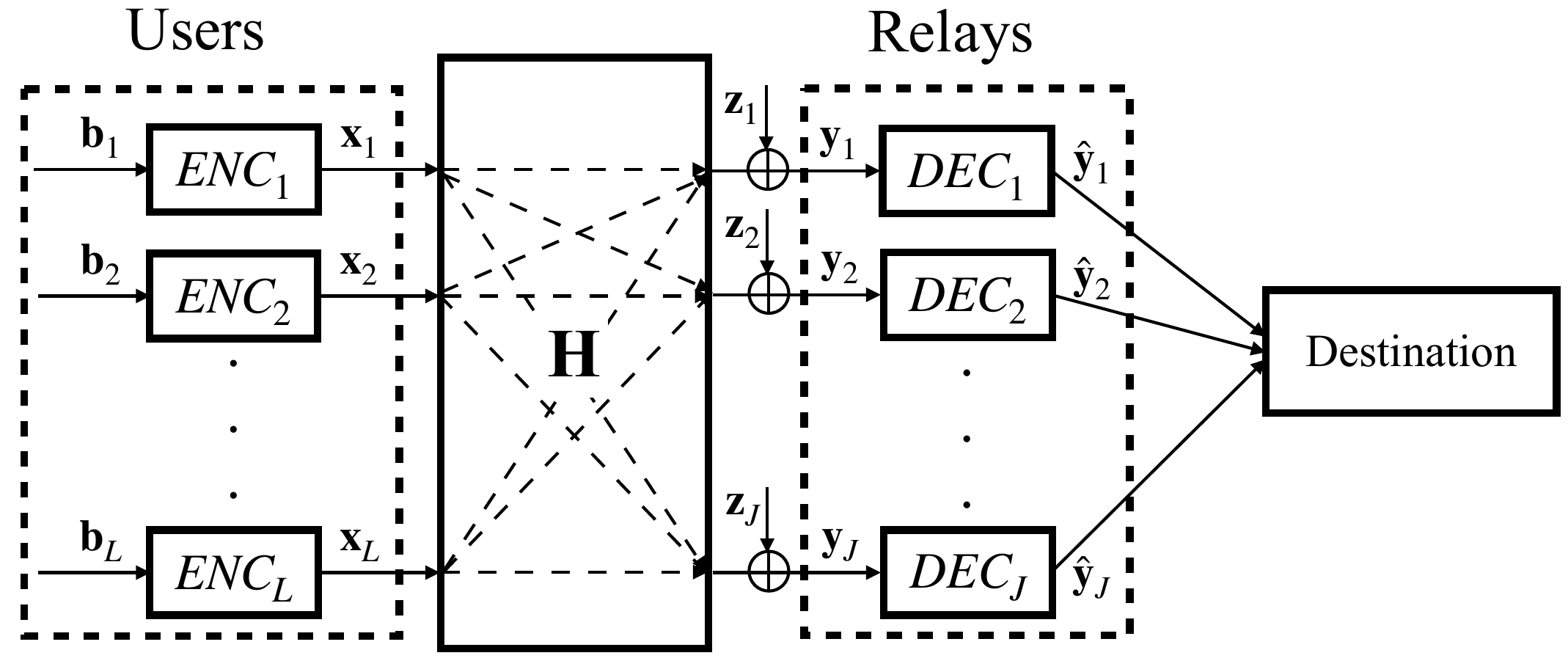}
    \caption{System model of multiple access network compute-forward with $L$ users and $J$ relays.}
    \label{fig_multiple_access_channel}
\end{figure}

Compute-forward is a multiple access relaying scheme \cite{nazer2011compute} where one or more relays aim to decode one or more linear combinations (or linear equations) of users' messages instead of decoding them individually. In \cite{nazer2011compute}, the uncoded message $\mathbf{b} \in \mathbb{F}_q^N$, where $q$ is a prime number. For a real-valued $L$-user $J$-relay system with $J \geq L$ as shown in Fig.~\ref{fig_multiple_access_channel}, suppose all users sharing the same codebook $\mathcal{C}= \Lambda_c/ \Lambda_s$, the users send lattice codewords $\mathbf{x}_i= ENC_i(\mathbf{b}_i)$, for $i= 1, 2, \cdots, L$, through a multiple access channel. The received message at the $j$-th relay is
\begin{align} \label{equ_CF_j_relay_rece}
    \mathbf{y}_j= \sum_{i= 1}^L h_{i, j} \mathbf{x}_i+ \mathbf{z}_j,
\end{align}
where $h_{i, j} \in \mathbb{R}$ is the channel coefficient between the $i$-th user and the $j$-th relay and $\mathbf{z}_j \sim \mathcal{N}(0, \sigma^2 \mathbf{I}_N)$. 
A single-user decoder estimates a desired linear combination $\sum_{i= 1}^L a_{j, i} \mathbf{x}_i \bmod \Lambda_s$ by 
\begin{align} \label{equ_def_latticeequation}
    \hat{\mathbf{y}}_j= DEC_{\Lambda_c}(\alpha_j \mathbf{y}_j) \bmod \Lambda_s,
\end{align}
with $a_{j, i} \in \mathbb{Z}$ and $\alpha_j \in \mathbb{R}$. Then relays forward $\hat{\mathbf{y}}_j$ and $\mathbf{a}_j= [a_{j, 1}, a_{j, 2}, \cdots, a_{j, L}]^T$, for $j= 1, 2, \cdots, J$, to the destination. The users' messages $\hat{\mathbf{b}}_1, \hat{\mathbf{b}}_2, \cdots, \hat{\mathbf{b}}_L$ can be recovered by solving linear equations, if and only if the integer coefficient matrix $([\mathbf{a}_1, \mathbf{a}_2 \cdots, \mathbf{a}_J] \bmod q)$ has rank $L$ over $\mathbb{F}_q$. 

The coefficient set $\{\mathbf{a}, \alpha\}$ is selected as follows. Assume all users share the same codebook $\mathcal{C}$ with power $P$.
Given channel vector $\mathbf{h} \in \mathbb{R}^L$ and $\mathbf{a} \in \mathbb{Z}^L$, the computation rate is defined as
\begin{align} \label{equ_def_computationR}
    R_c(\mathbf{h}, \mathbf{a})= \mathop{\max}_{\alpha \in \mathbb{R}} \frac{1}{2} \log^+ \left( \frac{P}{\alpha^2 \sigma^2+ P \|\alpha \mathbf{h} - \mathbf{a}\|^2} \right).
\end{align}
The values of $\mathbf{a}$ and $\alpha$ are selected to maximize the computation rate $R_c$ as
\begin{align} %\label{equ_def_optAalpha}
%\begin{cases}
    \mathbf{a} & = {\underset{\mathbf{a}}{\arg\max}} \frac{1}{2} \log^+ \left( \left( \|\mathbf{a}\|^2 - \frac{P(\mathbf{h}^T \mathbf{a})^2}{\sigma^2+ P \|\mathbf{h}\|^2} \right)^{-1} \right), \label{equ_def_optA} \\
    \alpha & = \frac{P \mathbf{h}^T \mathbf{a}}{\sigma^2+ P \|\mathbf{h}\|^2}, \label{equ_def_optalpha}
%\end{cases}
\end{align}
where $\log^+ (x) \dot= \max(\log(x), 0)$. The integer coefficients $\mathbf{a}$ are restricted by 
\begin{align} \label{equ_CF_a_condi}
    0< \|\mathbf{a}\|^2 < \sigma^2+\| \mathbf{h}\|^2 P
\end{align}
to have non-zero computation rate \cite[Lemma 1]{nazer2011compute}. 

The original CF \cite{nazer2011compute} considers a power-constrained relay with $\bmod\ \Lambda_s$ decoding. This restricts the uncoded message and decoding operations to a prime-size finite field $\mathbb{F}_q$, which loses flexibility on lattice code design. In \cite{ordentlich2011practical} and \cite{mejri2013practical}, authors investigated an alternative CF relaying strategy, called incomplete-CF (ICF) in \cite{mejri2013practical}, where relay is power unconstrained with the desired linear combination being $\sum_{i= 1}^L a_{j, i} \mathbf{x}_i$. The linear combination is estimated as follows without $\bmod\ \Lambda_s$ in \eqref{equ_def_latticeequation}:
\begin{align} \label{equ_def_latticeequation2}
    \hat{\mathbf{y}}_j= DEC_{\Lambda_c}(\alpha_j \mathbf{y}_j).
\end{align}
Under ICF, the destination only requires $[\mathbf{a}_1, \mathbf{a}_2 \cdots, \mathbf{a}_J]$ to have rank $L$ in the real number field. Since the operation is over the real numbers, the ICF scheme allows uncoded messages $\mathbf{b} \in \mathbb{Z}^N$ and gives more flexibility on lattice code design to match system requirements for error performance and complexity.

In this paper, a 2-hop network using ICF is considered. Since the purpose of this paper is to study lattice construction design with physical layer error detection and retry decoding at the relay node, we focus on user-relay links which suffer from Rayleigh fading. The equation error rate (EER) at the relay, that is $\hat{\mathbf{y}}_j \neq \sum_{i= 1}^L a_{i, j} \mathbf{x}_i$, is measured to evaluate the proposed lattice construction and retry decoding scheme of this paper.

\section{Retry decoding for single user transmission} \label{sec_su}

In this section, we consider retry decoding for single user transmission through the AWGN channel. First, the retry decoding scheme is described with an algorithm that recursively finds a finite-length $\alpha$ candidate list in the order of probability of correct decoding. A lower bound on error rate for retry decoding is then derived by extending the list to all $\alpha \in \mathbb{R}$. The proposed candidate search algorithm and lower bound on error rate are derived for a genie-aided decoder, where the user's message is assumed to be known by the genie to detect errors. Numerical results using $E_8$ and $BW_{16}$ lattice codes verify that retry decoding achieves lower WER than one-shot decoding only using $\alpha_{MMSE}$.

\subsection{Decoding scheme} \label{sec_SU_dec_scheme}
{{}For single user transmission using finite dimensional lattices and lattice codes with one-shot decoding, the receiver applies a MMSE scaling factor $\alpha_{MMSE}$ in \eqref{equ_alpha_MMSE} to received message for lattice decoding in \eqref{equ_lattice_dec}. For unconstrained lattices, the scaling factor can be seen as $\alpha_{MMSE}= P/(P+ \sigma^2) \rightarrow 1$ since the average power $P \rightarrow \infty$. If a decoding error is detected, the proposed retry decoding scheme allows lattice decoder to adjust the scaling factor $\alpha$ to achieve a lower error rate.
}

At the receiver, the decoder stores a list of $\alpha$ candidates, called the search space. For retry decoding having $k$ levels, the list is grouped into $k$ non-overlapping subsets $\mathcal{A}_1, \mathcal{A}_2, \cdots, \mathcal{A}_k$, where each level may contain multiple $\alpha$ candidates. The decoding starts from $\alpha \in \mathcal{A}_1$; then tests $\mathcal{A}_2, \cdots, \mathcal{A}_k$ sequentially. Error detection is performed after each decoding attempt, rather than the whole level. The decoding terminates as soon as no error is detected or all candidates are tested. If all $\alpha$'s failed on error detection, the decoder may output a decoding failure to request a re-transmission.

Below, we give an algorithm to find $\alpha$ candidates which also shows how we group the list and number of element in each subset $\mathcal{A}$ using genie-aided decoding and Monte-Carlo-based search. To perform efficient decoding, $\mathcal{A}_1, \mathcal{A}_2, \cdots, \mathcal{A}_k$ are generated in the order to maximize the probability of correct decoding given the previous candidates failed. 
%By this algorithm, it is observed that $\mathcal{A}_i$ for decoding level $i$ contains $2^ {i- 1}$ of $\alpha$ values that have similar error performance. The order within $\mathcal{A}_i$ can be arbitrary since it does not affect error performance.
In this algorithm, $\mathcal{A}_i$ for decoding level $i$ contains $2^ {i- 1}$ of $\alpha$ values from which similar error performance is observed. The order within $\mathcal{A}_i$ can be arbitrary since it does not affect error performance.
For a given lattice code $\mathcal{C}$, the candidate list only depends on SNR. Therefore, the algorithm can be performed offline to generate a look-up table of decoding coefficients for all required SNR values in advance.

Details of the algorithm are as follows. Let $\mathbf{x} \in \mathcal{C}$ be a randomly generated lattice codeword and $\mathbf{z} \sim \mathcal{N}(0, \sigma^2 \mathbf{I})$. The probability of correctly decoding the received message $\mathbf{y}= \mathbf{x}+ \mathbf{z}$ is a function of $\alpha$:
\begin{align} \label{equ_prob_correct_dec}
    P(\alpha)= \Pr(DEC_{\Lambda}(\alpha \mathbf{y})= \mathbf{x}).
\end{align}
Given a search space $[\alpha_{min}, \alpha_{max}]$, the first step of the algorithm finds $\alpha_{1, 1}$ that maximizes
\begin{align}
    \alpha_{1, 1}= \mathop{\arg\max}_{\alpha_{min} \leq \alpha \leq \alpha_{max}} P(\alpha).
\end{align}
The result of the first step is given as $\mathcal{A}_1= \{\alpha_{1, 1}\}$. Empirically, we have $\alpha_{1, 1} \approx \alpha_{MMSE}$ \cite{erez2004achieving}, \cite{ferdinand2014mmse}, which could alternatively be used with almost no loss of error performance. 

For the $(k+1)$-th ($k \geq 1$) step, let $\mathcal{A}_1, \mathcal{A}_2 \cdots, \mathcal{A}_{k}$ be the candidate list found in previous steps. To find $\mathcal{A}_{k+ 1}$, we define: a) $e_k$ as the event that decoding failed for all $\alpha \in \bigcup_{i= 1}^{k} \mathcal{A}_i$; b) $\mathcal{A}'_k$ as a sorted list in ascending order, $\mathcal{A}'_k= sort(\{\alpha_{mix}, \alpha_{max}\} \cup \bigcup_{i= 1}^{k} \mathcal{A}_i)$.
The algorithm then finds $\alpha$ that maximizes the probability of correct decoding given event $e_k$ occurred as
\begin{align} \label{equ_Prob_previous_failed}
    P(\alpha | e_k)= \Pr(DEC_{\Lambda}(\alpha \mathbf{y})= \mathbf{x} |  e_k).
\end{align}
Since $P(\alpha | e_k)= 0$ for $\alpha \in \bigcup_{i= 1}^{k- 1} \mathcal{A}_i$, the search space is split into $2^{k}$ intervals bounded by adjacent elements in $\mathcal{A}'_k$. From each interval, one local optimum $\alpha$ can be found which maximizes
\begin{align}
    \alpha_{k+ 1, j}= \mathop{\arg\max}_{\alpha_{j} \leq \alpha< \alpha_{j+ 1}} P(\alpha | e_k),
\end{align}
where $\alpha_{j}, \alpha_{j+ 1} \in \mathcal{A}'_{k}$ for $j= 1, 2, \cdots, 2^{k}$. The result of the $(k+ 1)$-th step of algorithm is $\mathcal{A}_{k+ 1}= \{\alpha_{k+ 1, 1}, \alpha_{k+ 1, 2}, \cdots, \alpha_{k+ 1, 2^k}\}$. 

Fig.~\ref{fig_E8_alphasub_condi} shows $P(\alpha)$ and $P(\alpha | e_i)$ for $i= 1, 2$ using an $E_8$ lattice code, from which $\alpha$ candidates for $\mathcal{A}_1, \mathcal{A}_2$ and $\mathcal{A}_3$ are found. It is illustrated that the search space in the $(k+1)$-th step is split into $2^k$ intervals between $\alpha$'s for which $P(\alpha | e_k)= 0$. And within each subset $\mathcal{A}_{i+ 1}$, the values of $P(\alpha | e_i)$ for different candidates are close. For $i= 1$, the two maximum values of $P(\alpha | e_1)$ are approximately $0.2477$ and $0.2996$ using $\alpha_{2, 1} \approx 0.9103$ and $\alpha_{2, 2} \approx 1.0555$, respectively, indicating a fraction of $0.5473$ of messages failed decoding using $\alpha_{MMSE}$ can be corrected by retry decoding using $\mathcal{A}_2= \{\alpha_{2, 1}, \alpha_{2, 2}\}$. 

\begin{figure}[t]
    \centering
    \includegraphics[width=0.9\linewidth]{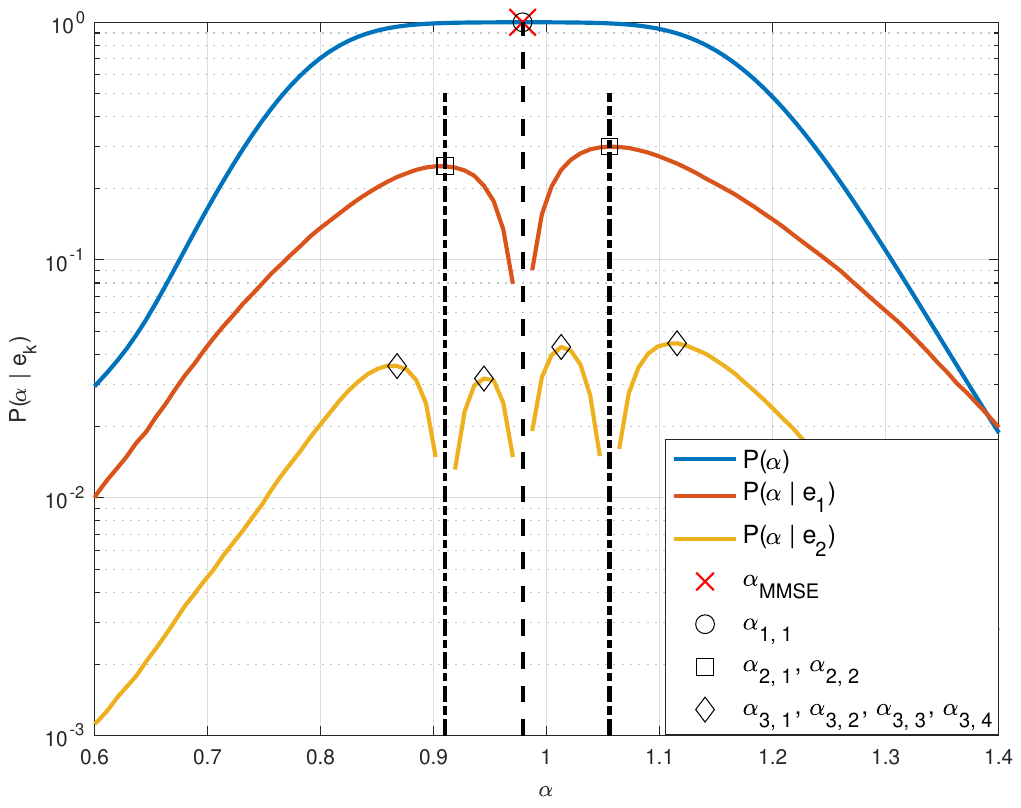}
    \caption{$P(\alpha)$, $P(\alpha | e_1)$ and $P(\alpha | e_2)$ curve for $E_8$ lattice code with hypercube shaping and code rate $R= 2$. $SNR= 17$dB so that $1- P(\alpha_{MMSE})\approx 10^{-3}$. The $\alpha_{MMSE}$ and search results $\alpha_{1, 1} \cdots \alpha_{3, 4}$ are marked at the corresponding curves.}
    \label{fig_E8_alphasub_condi}
\end{figure}

\subsection{Lower bound on error probability}  \label{sec_SU_lb}
A lower bound on error probability is derived to show the limit of retry decoding by assuming a genie-aided exhaustive search decoder for which the search space is extended from a finite-length list to all $\alpha \in \mathbb{R}$. With this assumption, the transmitted message $\mathbf{x}$ can be correctly decoded from the received message $\mathbf{y}$ if and only if $\exists \alpha \in \mathbb{R}, DEC_{\Lambda}(\alpha \mathbf{y})= \mathbf{x}$. This implies that the line connecting $\mathbf{y}$ and the origin $\mathbf{0}$ passes through the Voronoi region $\mathcal{V}(\mathbf{x})$. We refer to any such $\mathbf{y}$ as decodable and define the union of all decodable $\mathbf{y}$ as the decodable region with respect to $\mathbf{x}$.
\begin{myDef} \label{def_decodable_region}
    \rm (Decodable region) Given a non-zero lattice point $\mathbf{x}$, the decodable region is
    \begin{align} \label{equ_decodable_region}
    \mathcal{D}(\mathbf{x})= \{\frac{1}{\alpha} \mathbf{u} | \alpha \in \mathbb{R} \setminus 0, \mathbf{u} \in \mathcal{V}(\mathbf{x})\}.
\end{align}
\end{myDef}
Geometrically, $\mathcal{D}(\mathbf{x})$ forms an $N$-dimensional cone region with vertex at the origin $\mathbf{0}$. 
{{}A 2-dimensional example is illustrated in Fig.~\ref{fig_decodeable_Z2}. Suppose $\mathbf{x}_1= [5, 0]^t$ is transmitted, then $\mathbf{y}_1$ is decodable and $\mathbf{y}_1'$ is non-decodable.}
%A 2-dimensional example is shown in Fig.~\ref{fig_decodeable_A2}. 
The decoding error probability of the genie-aided exhaustive search decoder for a given $\mathbf{x}$ is obtained as 
\begin{align}
    P_{e, Dec}= 1-  \Pr(\mathbf{y} \in \mathcal{D}(\mathbf{x})).
\end{align}
{{} However, the area of $\mathcal{D}(\mathbf{x})$ depends on the value of $\mathbf{x}$ (not only the underlying lattice $\Lambda$ and the message power $\|\mathbf{x}\|^2$). For example, as shown in Fig.~\ref{fig_decodeable_Z2}, $\mathbf{x}_1= [5, 0]^T$ and $\mathbf{x}_2= [4, 3]^T$, the area of $\mathcal{D}(\mathbf{x}_1)$ and $\mathcal{D}(\mathbf{x}_2)$ are different since the angle $\theta_1 \neq \theta_2$. Therefore $P_{e, Dec}$ is hard to find in general. }%However, the shape of $\mathcal{D}(\mathbf{x})$ depends on $\mathbf{x}$ and therefore $P_{e, Dec}$ is hard to find in general. 
Instead, using the covering sphere $\mathcal{S}_c(\mathbf{x})$, we define 
\begin{align}
    \mathcal{D}_c(\mathbf{x})= \{\frac{1}{\alpha} \mathbf{u} | \alpha \in \mathbb{R}  \setminus 0, \mathbf{u} \in \mathcal{S}_c(\mathbf{x})\}.
\end{align}
For finite dimensional lattices, we have $\mathcal{V}(\mathbf{x}) \subset \mathcal{S}_c(\mathbf{x})$, by which $\mathcal{D}(\mathbf{x}) \subset \mathcal{D}_c(\mathbf{x})$ and $P_{e, Dec}$ is strictly lower bounded by
\begin{align} \label{equ_lowerbound_form}
    P_{e, Dec}> 1- \Pr(\mathbf{y} \in \mathcal{D}_c(\mathbf{x})).
\end{align}
Due to the circular symmetry of $N$-dimensional Gaussian noise, $\Pr(\mathbf{y} \in \mathcal{D}_c(\mathbf{x}))$ only depends on the message power $\|\mathbf{x}\|^2$ and covering radius $r_c$. {{}Similarly, using the effective sphere $\mathcal{S}_e(\mathbf{x})$, we define 
\begin{align}
    \mathcal{D}_e(\mathbf{x})= \{\frac{1}{\alpha} \mathbf{u} | \alpha \in \mathbb{R}  \setminus 0, \mathbf{u} \in \mathcal{S}_e(\mathbf{x})\},
\end{align}
from which an \emph{effective sphere estimate} of error probability is given by:
\begin{align} \label{equ_estimate_form}
    P_{e,Dec} \approx 1- \Pr(\mathbf{y} \in \mathcal{D}_e(\mathbf{x})).
\end{align}
An analytical form of the lower bound on error probability in \eqref{equ_lowerbound_form} and the effective sphere estimate in \eqref{equ_estimate_form} are derived for AWGN transmission as follows.

}

\begin{figure}[t]
    \centering
    \includegraphics[width=0.9\linewidth]{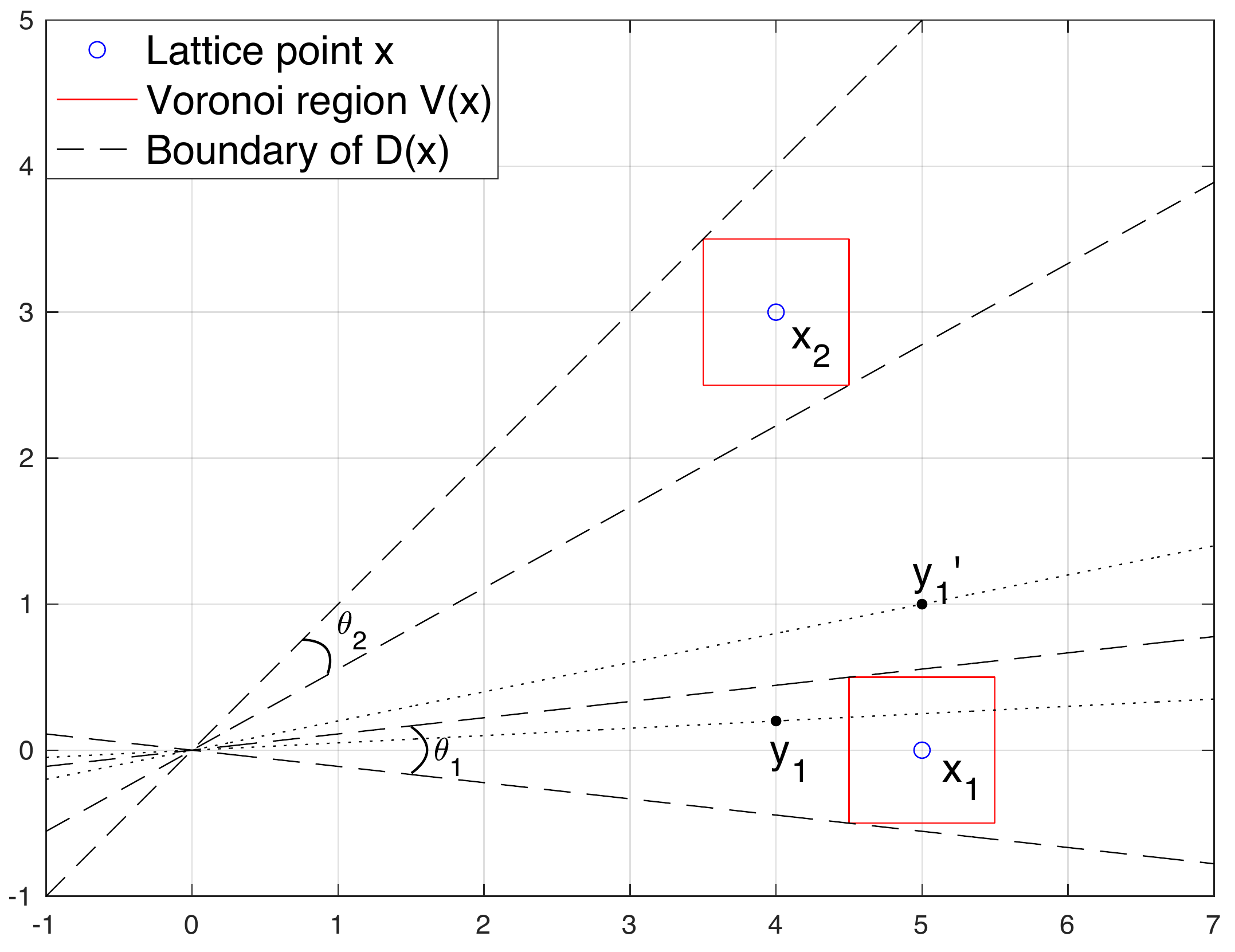}
    \caption{$\mathcal{D}(\mathbf{x})$ of $Z_2$ lattice with $\mathbf{x}_1= [5, 0]^T$ and $\mathbf{x}_2= [4, 3]^T$. With respect to $\mathbf{x}_1$, a decodable $\mathbf{y}_1$ and a non-decodable $\mathbf{y}_1'$ are plotted. The area of $\mathcal{D}(\mathbf{x}_1)$ and $\mathcal{D}(\mathbf{x}_2)$ are different since the angle $\theta_1 \neq \theta_2$.}
    \label{fig_decodeable_Z2}
\end{figure}

\begin{myTheor} \label{theo_su_bound}
\rm  Let non-zero $\mathbf x$ be a lattice point of an $N \geq 2$ dimensional lattice $\Lambda$ having covering radius $r_c$. Let $P_{\mathbf{x}}= \|\mathbf{x}\|^2/N$ and $\sigma^2$ be per-dimensional message and noise power. With the restriction $r_c^2 < N P_{\mathbf x}$, the word error probability for retry decoding is lower bounded by:
\begin{align} \label{equ_sphere_bound}
    P_{e,Dec} > 1- \int_{-\infty}^{\infty} \frac{1}{\sqrt{2 \pi \sigma^ 2}} e^{- \frac{z^2}{2 \sigma^2}} (1 - h(z)) dz
\end{align}
where if $N$ is odd:
\begin{align*}
    h(z) =  e^{- t}\left(\sum_{k= 0}^{(N- 3)/ 2} \frac{t^k}{k!}\right)
\end{align*}
and if $N$ is even:
\begin{align*}
    h(z) =  \erfc(t^ {1/2}) + e^{- t}\left(\sum_{k= 1}^{(N- 2)/ 2} \frac{t^{k- 1/2}}{(k- 1/ 2)!}\right)
\end{align*}
with $t= f^2(z) / (2 \sigma^2)$ and $f(z)= \left| \frac{r_c}{\sqrt{N P_{\mathbf x} - r_c^ 2}} z + \sqrt{\frac{N P_{\mathbf x} r_c^ 2}{N P_{\mathbf x} - r_c^ 2}} \right|$. 
\end{myTheor}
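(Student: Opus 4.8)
The plan is to evaluate $\Pr(\mathbf{y} \in \mathcal{D}_c(\mathbf{x}))$ exactly, since the strict inequality in \eqref{equ_lowerbound_form} has already been justified by $\mathcal{V}(\mathbf{x}) \subset \mathcal{S}_c(\mathbf{x})$ and hence $\mathcal{D}(\mathbf{x}) \subset \mathcal{D}_c(\mathbf{x})$. First I would translate membership in the cone $\mathcal{D}_c(\mathbf{x})$ into an elementary condition on $\mathbf{y}$. By Definition~\ref{def_decodable_region} (applied to $\mathcal{S}_c$), $\mathbf{y} \in \mathcal{D}_c(\mathbf{x})$ if and only if the line through the origin and $\mathbf{y}$ meets the covering ball $\mathcal{S}_c(\mathbf{x})$, which is equivalent to requiring the perpendicular distance from $\mathbf{x}$ to that line to be at most $r_c$. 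Writing $\theta$ for the angle between $\mathbf{x}$ and $\mathbf{y}$, this distance equals $\|\mathbf{x}\| \sin\theta$, so the decodability condition becomes $\|\mathbf{x}\|^2 \sin^2\theta \le r_c^2$. The hypothesis $r_c^2 < N P_{\mathbf{x}} = \|\mathbf{x}\|^2$ is exactly what places the origin strictly outside the ball, so that the cone is proper, with half-angle $\arcsin(r_c/\|\mathbf{x}\|) < \pi/2$, rather than all of $\mathbb{R}^N$.

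Next I would introduce an orthonormal frame whose first axis points along $\mathbf{x}$ and decompose $\mathbf{y} = (y_1, \mathbf{y}_\perp)$ accordingly. Substituting $\sin^2\theta = \|\mathbf{y}_\perp\|^2/(y_1^2 + \|\mathbf{y}_\perp\|^2)$ into the condition above and clearing denominators reduces decodability to
\[
\|\mathbf{y}_\perp\| \le \frac{r_c}{\sqrt{N P_{\mathbf{x}} - r_c^2}}\,|y_1|.
\]
Because $\mathbf{x} = \sqrt{N P_{\mathbf{x}}}\,\mathbf{e}_1$ and $\mathbf{y} = \mathbf{x} + \mathbf{z}$, the axial coordinate is $y_1 = \sqrt{N P_{\mathbf{x}}} + z$ with $z \sim \mathcal{N}(0,\sigma^2)$, and the right-hand side is precisely $f(z)$; the absolute value there is what accounts for the lower nappe of the double cone (the case $y_1 < 0$). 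The transverse part $\mathbf{y}_\perp = (z_2, \dots, z_N)$ consists of $N-1$ i.i.d.\ $\mathcal{N}(0,\sigma^2)$ components independent of $z$, so conditioned on $z$ the quantity $\|\mathbf{y}_\perp\|^2/\sigma^2$ is chi-squared with $N-1$ degrees of freedom. The structural fact being used here is that decodability depends on $\mathbf{y}$ only through $(y_1, \|\mathbf{y}_\perp\|)$, i.e.\ it is rotationally symmetric about the $\mathbf{x}$-axis, which is what collapses the $N-1$ transverse Gaussians into a single radial distribution.

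Conditioning on $z$ and integrating over its Gaussian density then gives
\[
\Pr(\mathbf{y} \in \mathcal{D}_c(\mathbf{x})) = \int_{-\infty}^{\infty} \frac{1}{\sqrt{2\pi\sigma^2}}\, e^{-z^2/(2\sigma^2)}\, \Pr\!\big(\|\mathbf{y}_\perp\| \le f(z)\big)\, dz,
\]
and setting $h(z) = \Pr(\|\mathbf{y}_\perp\| > f(z)) = \Pr(\chi^2_{N-1} > f^2(z)/\sigma^2)$ turns the integrand into $(1 - h(z))$, matching \eqref{equ_sphere_bound}. It then remains to put the chi-squared tail probability in closed form, which is where the two parities of $N$ split: with $t = f^2(z)/(2\sigma^2)$, the survival function is the regularized upper incomplete gamma $Q\!\big(\tfrac{N-1}{2}, t\big)$, whose order is a positive integer when $N$ is odd and a half-integer when $N$ is even. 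The integer case yields the finite exponential-polynomial sum, and the half-integer case, obtained from $\Gamma(\tfrac12, t) = \sqrt{\pi}\,\erfc(t^{1/2})$ together with the recursion $\Gamma(s+1,t) = s\,\Gamma(s,t) + t^s e^{-t}$, produces the $\erfc$ term plus its companion sum.

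I expect the main obstacle to be less the probability calculation than the careful geometric reduction: verifying that double-cone membership is captured cleanly by the single inequality in $|y_1|$ (including the degenerate regime excluded by $r_c^2 < N P_{\mathbf{x}}$), and then tracking the half-integer factorial bookkeeping so that the incomplete-gamma recursion reproduces the stated summands exactly, with the index ranges $0 \le k \le (N-3)/2$ and $1 \le k \le (N-2)/2$ coming out correctly.
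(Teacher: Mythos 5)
Your proposal is correct and takes essentially the same route as the paper's proof: both pass to an orthonormal frame aligned with $\mathbf{x}$, condition on the axial Gaussian coordinate, and reduce the problem to the probability that the $(N-1)$ transverse Gaussian components fall in a ball of radius $f(z)$, exploiting rotational symmetry about the $\mathbf{x}$-axis. The only difference is cosmetic: you derive the closed form of that radial probability from the $\chi^2_{N-1}$ survival function via the incomplete-gamma recursion (with $\Gamma(\tfrac12,t)=\sqrt{\pi}\,\erfc(t^{1/2})$), whereas the paper simply cites the closed form from \cite[Theorem 2.2]{tarokh1999universal}.
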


\begin{proof}
Here we give a sketch of the proof. The details are provided in \cite{xue2022lower}. For the AWGN transmission, the lower bound is obtained by taking the integral of Gaussian noise $\mathbf{z}$ over $\mathcal{D}_c(\mathbf{x})- \mathbf{x}$ as:
\begin{align}
    P_{e, Dec} & > 1- \Pr(\mathbf{z} \in \mathcal{D}_c(\mathbf{x})- \mathbf{x}) \nonumber \\
                             & = 1- \int_{\mathcal{D}_c(\mathbf{x})- \mathbf{x}} g_N(\mathbf{z})\, d\mathbf{z} \label{equ_int_ndim_noise}
\end{align}
where $g_N(\mathbf{z})$ is the density function of $\mathbf{z}$. Due to the circular symmetry of Gaussian noise, a rotated orthogonal coordinate system centered at $\mathcal{D}_c(\mathbf{x})- \mathbf{x}$ is built where $z_1$ axis connects the vertex of $\mathcal{D}_c(\mathbf{x})$ and $\mathbf{x}$ (See Fig.~\ref{fig_possible_noise_region} for an example in 2 dimension). Since components in $\mathbf{z}$ are i.i.d, we have:
\begin{align}
    \int_{\mathcal{D}_c(\mathbf{x})- \mathbf{x}} g_N(\mathbf{z})\, d\mathbf{z} = \int_{-\infty}^{\infty} \frac{1}{\sqrt{2 \pi \sigma^ 2}} e^{- \frac{z_1^2}{2 \sigma^2}} P_s(r_z)  d z_1,
\end{align}
where $P_s(r_z)$ is integral of Gaussian noise over an $(N-1)$-dimensional sphere with radius $r_z$, which is a function of $z_1$ and has a closed form \cite[Theorem 2.2]{tarokh1999universal}. Then the lower bound is obtained as in \eqref{equ_sphere_bound} where only a single integral is used.
\end{proof}

{{}
\begin{myCorollary} \label{corollary_effec_sphere_esti}
    \rm The effective sphere estimate in \eqref{equ_estimate_form} can be obtained using the same form as \eqref{equ_sphere_bound} by replacing the covering radius $r_c$ with the effective radius $r_e$.
\end{myCorollary}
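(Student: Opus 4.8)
The plan is to observe that the derivation underlying Theorem~\ref{theo_su_bound} never uses any property of the covering sphere $\mathcal{S}_c(\mathbf{x})$ beyond the fact that it is an $N$-dimensional sphere of a fixed radius centered at $\mathbf{x}$. Since the effective sphere $\mathcal{S}_e(\mathbf{x})$ is likewise a sphere centered at $\mathbf{x}$, differing only in that its radius is $r_e$ rather than $r_c$, the entire argument transfers verbatim under the single substitution $r_c \mapsto r_e$. The corollary is therefore an exercise in checking that no step of the original computation breaks when one sphere is swapped for the other.

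First I would make the parallel explicit: $\mathcal{D}_e(\mathbf{x})$ is defined identically to $\mathcal{D}_c(\mathbf{x})$ with $\mathcal{S}_c(\mathbf{x})$ replaced by $\mathcal{S}_e(\mathbf{x})$, so it is again an $N$-dimensional cone with vertex at the origin whose generating cross-section is a sphere centered at $\mathbf{x}$, now of radius $r_e$. Writing $P_{e,Dec} \approx 1 - \Pr(\mathbf{z} \in \mathcal{D}_e(\mathbf{x}) - \mathbf{x})$ and integrating the Gaussian density, I would reuse the same rotated orthogonal coordinate system with $z_1$-axis along the line joining the origin and $\mathbf{x}$. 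The circular symmetry of $\mathbf{z}$ and the i.i.d.\ property of its components then reduce the integral to $\int_{-\infty}^{\infty} \frac{1}{\sqrt{2\pi\sigma^2}} e^{-z_1^2/(2\sigma^2)} P_s(r_z)\, dz_1$ exactly as in the proof sketch.

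Second, I would verify that the radius enters the calculation only through the cross-sectional radius $r_z = f(z)$ of the cone, which in turn fixes $t = f^2(z)/(2\sigma^2)$ and the closed form $h(z)$. The geometry shows that $f(z)$ is determined solely by the ratio of the sphere radius to $\|\mathbf{x}\|$, so substituting $r_c \mapsto r_e$ in $f(z) = |\,r_e\, z/\sqrt{N P_{\mathbf{x}} - r_e^2} + \sqrt{N P_{\mathbf{x}} r_e^2/(N P_{\mathbf{x}} - r_e^2)}\,|$ produces the function appropriate to the effective sphere, while the odd/even closed forms of $h(z)$ retain their structure. The restriction $r_c^2 < N P_{\mathbf{x}}$ becomes $r_e^2 < N P_{\mathbf{x}}$, and this holds automatically: from $\mathcal{V}(\mathbf{x}) \subseteq \mathcal{S}_c(\mathbf{x})$ one gets $V(\mathcal{S}_e) = V(\Lambda) = V(\mathcal{V}) \le V(\mathcal{S}_c)$, hence $r_e \le r_c < \sqrt{N P_{\mathbf{x}}}$.

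The one conceptual point worth flagging --- rather than a genuine obstacle --- is that, unlike the covering sphere, the effective sphere does not contain $\mathcal{V}(\mathbf{x})$, so $\mathcal{D}_e(\mathbf{x})$ is neither a superset nor a subset of the true decodable region $\mathcal{D}(\mathbf{x})$. Consequently the resulting expression is an estimate, written with $\approx$ in \eqref{equ_estimate_form}, rather than a rigorous bound. The corollary asserts only that its analytical form coincides with \eqref{equ_sphere_bound} under the stated substitution, which the transfer argument above establishes.
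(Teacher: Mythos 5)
Your proposal is correct and matches the paper's intent exactly: the corollary is stated without a separate proof precisely because, as you argue, the derivation of Theorem~\ref{theo_su_bound} uses the covering sphere only through its radius, so the substitution $r_c \mapsto r_e$ carries the whole computation over, and your observation that $\mathcal{D}_e(\mathbf{x})$ neither contains nor is contained in $\mathcal{D}(\mathbf{x})$ (making the result an estimate, not a bound) is the same caveat the paper encodes with the $\approx$ in \eqref{equ_estimate_form}. Your additional check that $r_e \le r_c$ makes the restriction $r_e^2 < N P_{\mathbf{x}}$ automatic is a small but valid refinement beyond what the paper states.
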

}

\begin{figure}[t]
    \centering
    \includegraphics[width=0.9\linewidth]{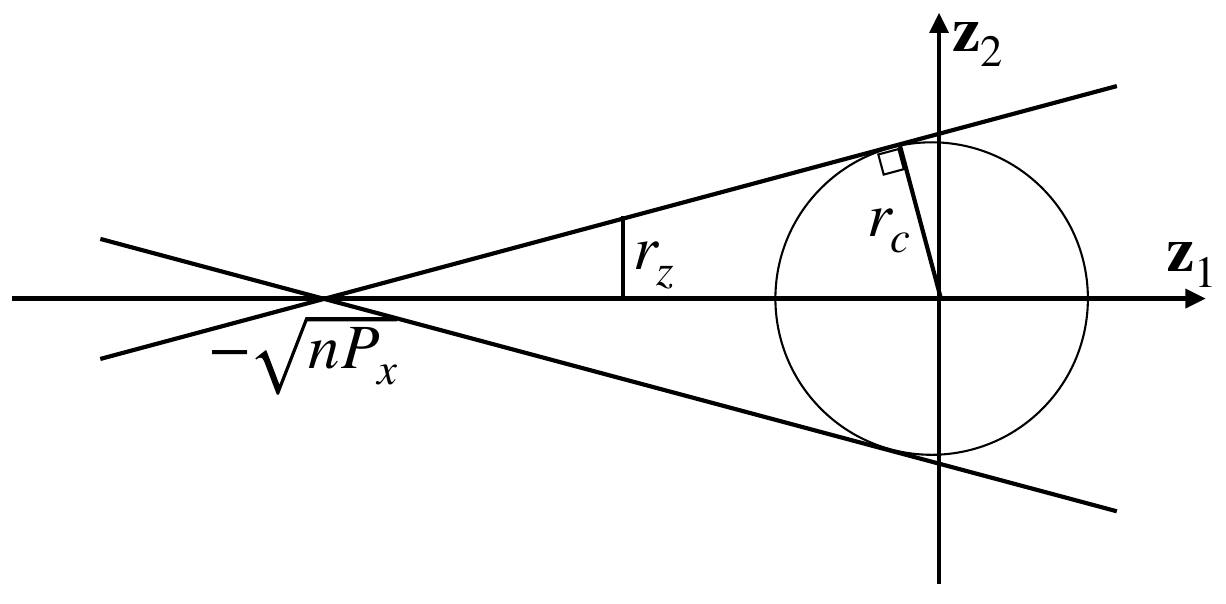}
    \caption{Example of rotated $\mathcal{D}_c(\mathbf{x})- \mathbf{x}$ in 2 dimensions. The vertex of the decodable region is $(-\sqrt{n P_{\mathbf x}},0)$, where $P_{\mathbf x}$ is the message power. $r_c$ is the covering radius.}
    \label{fig_possible_noise_region}
\end{figure}    

Theorem~\ref{theo_su_bound} is given with respect to a lattice $\Lambda$, but the result can be extended to lattice codes $\mathcal{C}= \Lambda_c/ \Lambda_s$ under lattice decoding by letting $P_{\mathbf{x}}$ be the average per-dimensional power of $\mathcal{C}$. The lower bound and effective sphere estimate are particularly suitable for analyzing error probabilities of low dimensional lattices, which have known covering radius and well-studied geometric properties. 
{{} Since the covering sphere only satisfies $\mathcal{V}(\mathbf{x}) \subset \mathcal{S}_c(\mathbf{x})$, the lower bound may not be tight. The tightness depends on the lattice covering thickness defined as $\Theta(\Lambda)= V(\mathcal{S}_c)/ V(\Lambda)$, where it may loose when lattice has large covering thickness $\Theta(\Lambda)$. For finite $N$, an upper and a lower bound on the thinnest covering are given in \cite[Chapter 2]{conway1993sphere} as
\begin{align}
    \frac{N}{e \sqrt{e}} \lessapprox \Theta \leq N \ln N+ N \ln \ln N+ 5N,
\end{align}
where the lower bound, implying the best achievable covering thickness, increases linearly as $N$ grows. However, how the covering thickness and the tightness of the lower bound are related is still an open question. The accuracy of the effective sphere estimate depends on the normalized second moment (NSM) of $\mathcal{V}(\mathbf{x})$, which has a lower bound obtained from an $N$-dimensional sphere as $\frac{\Gamma(N/ 2+ 1)^{2/ N}}{\pi (N+ 2)}$ \cite{zamir2014lattice}. A good lattice quantizer implies its Voronoi region is close to a sphere. For such lattices, the effective sphere estimate gives a relatively accurate estimate of WER of the genie-aided exhaustive search decoding.

}

Fig.~\ref{fig_E8_BW16_example_SU} shows the evaluation of the lower bound in \eqref{equ_lowerbound_form} and the effective sphere estimate in \eqref{equ_estimate_form} using $E_8$ and $BW_{16}$ lattice codes with hypercube shaping. For comparison, genie-aided exhaustive search decoding is evaluated to indicate the best retry decoding can achieve. A large search space of $[0.15, 0.85]$ for $E_8$ and $[0.4, 1.6]$ for $BW_{16}$ lattice codes with 200 $\alpha$ candidates allocated uniformly are assumed, for which $0.5$ dB and $0.4$ dB gains achieved at WER$= 10^{-5}$ for $E_8$ and $BW_{16}$ lattice codes, respectively. Since $E_8$ and $BW_{16}$ lattices are the best known quantizer among 8 and 16 dimensional lattices\cite{conway1993sphere} respectively, this implies a sphere-like Voronoi region from which \eqref{equ_estimate_form} gives a relatively accurate estimate of WER of the genie-aided exhaustive search decoding. The WER performance with a finite number of decoding attempts using $\mathcal{A}_1$ and $\mathcal{A}_2$ are also evaluated, which approaches that of exhaustive search decoding. This indicates that, by appropriately selecting $\alpha$ candidates, we may need only a small number of retry attempts to approach the error performance of the exhaustive search decoding.

\begin{figure}
    \centering
    \includegraphics[width=0.9\linewidth]{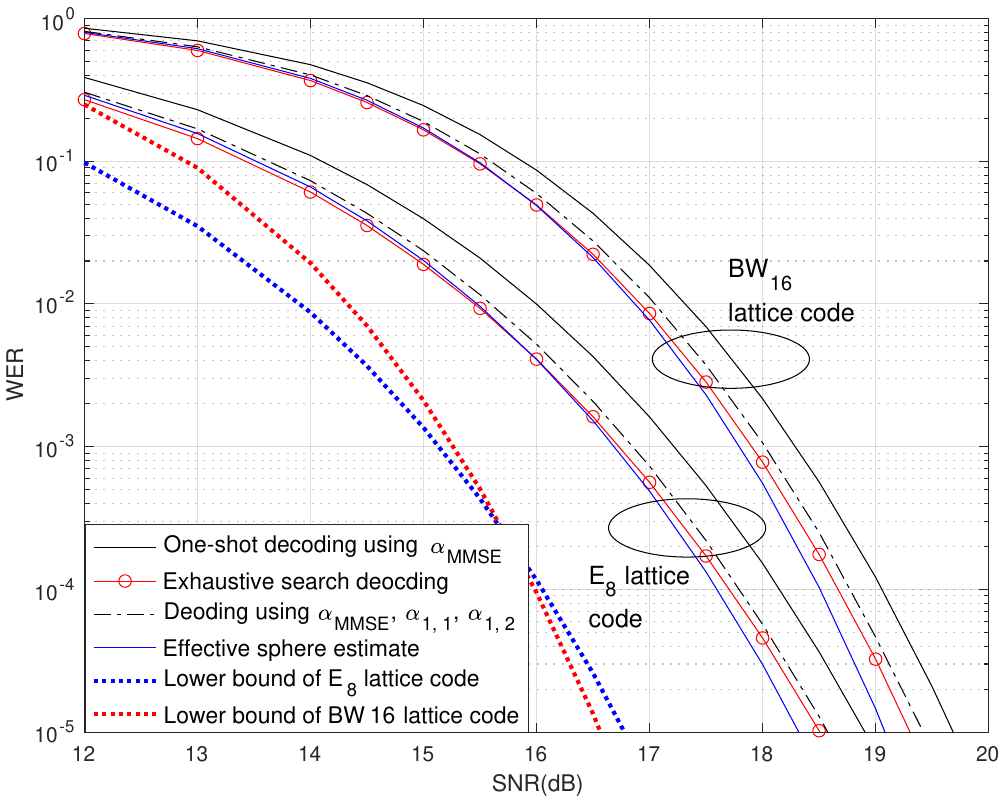}
    \caption{Numerical example using $E_8$ and $BW_{16}$ lattice codes with hypercube shaping and code rates $R_{8}= 2$ and $R_{16}= 2.25$, respectively.}
    \label{fig_E8_BW16_example_SU}
\end{figure}

%\newpage

{{}
\section{Retry decoding for CF relaying} \label{sec_mac_CFrelay}
This section describes retry decoding for CF relaying using the ICF scheme \cite{mejri2013practical}, where the linear combination is estimated using \eqref{equ_def_latticeequation2} without the $\bmod\ \Lambda_s$ operation. The retry decoding scheme and the decoding coefficient selection strategy are given. 

Recall that CF relaying uses a coefficient set $\{\mathbf{a}, \alpha\}$ to estimate a linear combination of users' messages. The received message in \eqref{equ_CF_j_relay_rece} is scaled by $\alpha$ at decoder as (omitting relay index $j$ here)
\begin{align} \label{equ_CF_message}
    \alpha \mathbf{y}= \sum_{i= 1}^L a_{i} \mathbf{x}_i+ \sum_{i= 1}^L (\alpha h_{i}- a_{i}) \mathbf{x}_i+ \alpha \mathbf{z},
\end{align}
from which the equivalent noise is defined as
\begin{align} \label{equ_CF_message_equi_noise}
    \Tilde{\mathbf{z}}= \sum_{i= 1}^L (\alpha h_{i}- a_{i}) \mathbf{x}_i+ \alpha \mathbf{z}.
\end{align}
A decoding error happens if $\Tilde{\mathbf{z}} \not\in \mathcal{V}(\mathbf{0})$. Retry decoding for CF relaying aims to change $\mathbf{a}$ and/or $\alpha$ at the decoder so that the new equivalent noise $\Tilde{\mathbf{z}}' \in \mathcal{V}(\mathbf{0})$ to reduce error rate. 

\subsection{Decoding schemes}   \label{sec_mac_CFrelay_dec_scheme}
Assume the current coefficient set $\{\mathbf{a}, \alpha\}$ failed decoding. Now we have freedom on changing $\mathbf{a}$ and/or $\alpha$ for retry decoding. The following three possible schemes are considered:
\begin{enumerate} [1)]
    \item select a new integer coefficient $\mathbf{a}'$, compute its MMSE $\alpha'$ using \eqref{equ_def_optalpha}; 
    \item keep integer coefficient $\mathbf{a}$ unchanged, select a new $\alpha'$;
    \item combine 1) and 2).
\end{enumerate}

}

For scheme 1, consider a $k$-level retry decoding using a length-$k$ candidate list $\{\mathbf{a}_1, \alpha_1\}, \{\mathbf{a}_2, \alpha_2\}, \cdots, \{\mathbf{a}_{k}, \alpha_{k}\}$ with computation rate $R_{c, 1} \geq R_{c, 2} \geq \cdots \geq R_{c, k}$. For $j= 1, \cdots, k$, the scaling factor $\alpha_j$ is obtained by \eqref{equ_def_optalpha} using $\mathbf{a}_j$. The decoding starts from $\{\mathbf{a}_1, \alpha_1\}$ and tests $\{\mathbf{a}_2, \alpha_2\}, \cdots, \{\mathbf{a}_k, \alpha_k\}$ sequentially with error detection performed after each attempt. At the $j$-th attempt using $\{\mathbf{a}_{j}, \alpha_{j}\}$, the estimated linear combination $\hat{\mathbf{x}}_j= DEC_{\Lambda_c}(\alpha_{j} \mathbf{y})$ is obtained by the lattice decoder for $\Lambda_c$. The decoding terminates as soon as $\hat{\mathbf{x}}_j$ passes error detection or all candidates are tested. If all candidates fail, the decoder may output a decoding failure and trigger a re-transmission request without forwarding the error-containing message into the system. 

Next we show how to generate a coefficient list for $\{\mathbf{a}, \alpha\}$. For a given channel $\mathbf{h}$, first initialize a search space $\mathcal{S}_a$ for $\mathbf{a}$ according to \eqref{equ_CF_a_condi}. A size reduction of $\mathcal{S}_a$ is then performed before searching.
{{}
\begin{myLemma} \label{lemma_CF_aiaj}
    \rm For any $\mathbf{a}_i \in \mathcal{S}_a$, all $\mathbf{a}_j$'s that satisfy $\mathbf{a}_j = m \mathbf{a}_i$, with integer $m = -1$ or $|m|> 1$, does not improve error performance and are eliminated from $\mathcal{S}_a$.
\end{myLemma}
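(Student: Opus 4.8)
The plan is to argue directly at the level of the decoding-success event, rather than only through the computation rate, since the claim concerns the error performance of the retry list. Both cases $m=-1$ and $|m|>1$ rest on two ingredients: (i) the scaled received signal, the MMSE coefficient $\alpha$ from \eqref{equ_def_optalpha}, and hence the equivalent noise $\tilde{\mathbf{z}}$ in \eqref{equ_CF_message_equi_noise}, all scale linearly in the integer vector; and (ii) the Voronoi region $\mathcal{V}(\mathbf{0})$ is a convex body that is symmetric about and contains the origin (a standard property of lattice Voronoi cells).

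First I would establish the scaling relation. Writing $\alpha_i$ for the MMSE coefficient of $\mathbf{a}_i$ via \eqref{equ_def_optalpha}, linearity of $\mathbf{h}^T\mathbf{a}$ gives $\alpha(m\mathbf{a}_i)=m\alpha_i$. Substituting $\mathbf{a}_j=m\mathbf{a}_i$ and $\alpha_j=m\alpha_i$ into \eqref{equ_CF_message_equi_noise} and factoring out $m$ yields $\tilde{\mathbf{z}}_j=m\tilde{\mathbf{z}}_i$, where $\tilde{\mathbf{z}}_i$ is the equivalent noise produced by $\{\mathbf{a}_i,\alpha_i\}$ on the \emph{same} received $\mathbf{y}$. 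This deterministic coupling of the two noise realizations is precisely what makes one candidate redundant given the other.

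Next I would invoke the geometry, using that decoding with $\{\mathbf{a},\alpha\}$ succeeds iff $\tilde{\mathbf{z}}\in\mathcal{V}(\mathbf{0})$. For $m=-1$, symmetry gives $-\tilde{\mathbf{z}}_i\in\mathcal{V}(\mathbf{0})\iff\tilde{\mathbf{z}}_i\in\mathcal{V}(\mathbf{0})$, so the success events of $\mathbf{a}_i$ and $-\mathbf{a}_i$ coincide identically and $-\mathbf{a}_i$ can never correct a realization that $\mathbf{a}_i$ misses. For $|m|>1$, I would use that a convex set containing the origin is star-shaped about it: if $m\tilde{\mathbf{z}}_i\in\mathcal{V}(\mathbf{0})$, then writing $\tilde{\mathbf{z}}_i=\tfrac{1}{m}(m\tilde{\mathbf{z}}_i)$ with $|1/m|<1$ and absorbing the sign of $m$ via symmetry, we get $\tilde{\mathbf{z}}_i\in\mathcal{V}(\mathbf{0})$. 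The contrapositive shows that whenever $\mathbf{a}_i$ fails ($\tilde{\mathbf{z}}_i\notin\mathcal{V}(\mathbf{0})$) the candidate $\mathbf{a}_j=m\mathbf{a}_i$ also fails; so, with $\mathbf{a}_i$ retained in the list, $\mathbf{a}_j$ contributes no newly correctable realizations and may be dropped. As a consistency check, the same scaling gives $\|\mathbf{a}_j\|^2-P(\mathbf{h}^T\mathbf{a}_j)^2/(\sigma^2+P\|\mathbf{h}\|^2)=m^2\,[\,\cdots\,]$ in \eqref{equ_def_optA}, whence $R_c(\mathbf{h},m\mathbf{a}_i)\le R_c(\mathbf{h},\mathbf{a}_i)$, matching the conclusion.

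The main obstacle I anticipate is handling the boundary of $\mathcal{V}(\mathbf{0})$ cleanly: the region in the definition is half-open (to break ties in $Q_{\Lambda_c}$), so the set equivalences above are exact only up to that boundary. Since $\tilde{\mathbf{z}}$ has a density (a Gaussian term plus the codeword-dependent term), the boundary is a null set and does not affect the error probability, so I would state the geometric step for the closed convex symmetric body $\overline{\mathcal{V}(\mathbf{0})}$ and remark that the discrepancy is a measure-zero event. A secondary point, quickly dispatched, is that eliminating $m\mathbf{a}_i$ only shrinks $\mathcal{S}_a$, so no admissibility concern arises.
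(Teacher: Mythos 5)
Your proof is correct and takes essentially the same route as the paper's: linearity of \eqref{equ_def_optalpha} gives $\alpha_j = m\alpha_i$, hence $\tilde{\mathbf{z}}_j = m\tilde{\mathbf{z}}_i$ on the same received $\mathbf{y}$, and a pathwise argument shows failure of $\{\mathbf{a}_i,\alpha_i\}$ forces failure of $\{m\mathbf{a}_i, m\alpha_i\}$, so $\Pr(\tilde{\mathbf{z}}_j \notin \mathcal{V}(\mathbf{0})) \geq \Pr(\tilde{\mathbf{z}}_i \notin \mathcal{V}(\mathbf{0}))$. You additionally supply the justification the paper leaves implicit --- symmetry and star-shapedness of the convex Voronoi cell and the measure-zero boundary issue --- which is a worthwhile tightening, particularly for $m < -1$, where the paper's remark that the noise direction is unchanged is imprecise but harmless under your symmetry step.
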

\begin{proof}
    Recall that $\alpha$ is obtained by \eqref{equ_def_optalpha}. For $m= -1$, it is trivial that $\mathbf{a}_j= -\mathbf{a}_i$ and $\alpha_j= -\alpha_i$ resulting in equal error performance. For $|m|> 1$, we have $\mathbf{a}_j= m\mathbf{a}_i$ and $\alpha_j= m\alpha_i$. Within a same transmission, messages $\mathbf{x}_i$, for $i= 1, 2, \cdots, L$, are unchanged. The equivalent noise for $\{\mathbf{a}_i, \alpha_i\}$ and $\{\mathbf{a}_j, \alpha_j\}$ satisfy $\Tilde{\mathbf{z}}_j= m \Tilde{\mathbf{z}}_i > \Tilde{\mathbf{z}}_i$, while the direction of noise vector does not change. If $\Tilde{\mathbf{z}}_i \not\in \mathcal{V}(\mathbf{0})$, then $\Tilde{\mathbf{z}}_j \not\in \mathcal{V}(\mathbf{0})$ is also satisfied. Therefore, we have $\Pr(\Tilde{\mathbf{z}}_j \not\in \mathcal{V}(\mathbf{0})) \geq \Pr(\Tilde{\mathbf{z}}_i \not\in \mathcal{V}(\mathbf{0}))$, that is the decoding error probability using $\{\mathbf{a}_j, \alpha_j\}$ is never smaller than that using $\{\mathbf{a}_i, \alpha_i\}$.
\end{proof}
\noindent Using the size reduced search space $\mathcal{S}_a$, the integer coefficients $\mathbf{a}_1, \mathbf{a}_2, \cdots, \mathbf{a}_k$ are selected with the $k$ greatest computation rates and the corresponding $\alpha_1, \alpha_2, \cdots, \alpha_k$ are computed using \eqref{equ_def_optalpha}.

Scheme 2 is similar to the retry decoding scheme for single user transmission. The $\alpha$ search algorithm discussed in Section~\ref{sec_SU_dec_scheme} can be applied here. For a $k$-level retry decoding, the candidate list is given as $\{\mathbf{a}, \mathcal{A}_{1}, \cdots, \mathcal{A}_{k}\}$. Note that Lemma~\ref{lemma_CF_aiaj} does not apply to scheme 2, since the direction of the equivalent noise vector changes for different $\alpha$'s. For example, suppose $\alpha'= m \alpha$, the equivalent noise of $\alpha'$ is
\begin{align} \label{equ_CF_message_equi_noise_s2}
    \Tilde{\mathbf{z}}' & = \sum_{i= 1}^L (\alpha' h_{i}- a_{i}) \mathbf{x}_i+ \alpha' \mathbf{z} \nonumber \\
    & = \sum_{i= 1}^L (m \alpha h_{i}- a_{i}) \mathbf{x}_i+ m \alpha \mathbf{z} \nonumber \\
    & = m \Tilde{\mathbf{z}}+ (m- 1)\sum_{i= 1}^L a_{i} \mathbf{x}_i,
\end{align}
where $\Tilde{\mathbf{z}}'$ is not a scaled version of $\Tilde{\mathbf{z}}$. The two terms in \eqref{equ_CF_message_equi_noise_s2} may have different directions so that vector cancellation may happen resulting a smaller equivalent noise even for $|m|> 1$. 
However, the optimal $\alpha$ obtained in \eqref{equ_def_optalpha} depends on the channel coefficient $\mathbf{h}$. The $\alpha$ candidate list should be generated differently for each $\mathbf{h}$, which might be impractical if channel is time variant. Therefore, we consider this scheme for fixed channel assumption, where generating $\alpha$ candidate list and retry decoding follow Section~\ref{sec_SU_dec_scheme}; while for time variant channel, we consider scheme 1) only.

Scheme 3 is a combination of scheme 1 and scheme 2. Due to the restrictions of scheme 2, this scheme is also considered for fixed channel assumption. For a $(k, m)$-level retry decoding, denote the candidate list as $\{\mathbf{a}_1, \mathcal{A}_{1, 1}, \cdots, \mathcal{A}_{1, m}\}, \cdots, \{\mathbf{a}_{k}, \mathcal{A}_{k, 1}, \cdots, \mathcal{A}_{k, m}\}$. Steps for finding the candidate list and retry decoding are straightforward. The decoder first finds a list of integer coefficients $\mathbf{a}_1, \mathbf{a}_2, \cdots, \mathbf{a}_k$ following scheme 1. For each $\mathbf{a}_i$, finding a list of $\alpha$ following scheme 2. And decoder applies scheme 1 and scheme 2 alternately for retry.

\subsection{Scheme 1 vs scheme 2}  \label{sec_mac_CFrelay_s1_vs_s2}
In the previous subsection, we gave the decoding schemes for CF relaying, where decoder has freedom to change $\mathbf{a}$ and/or $\alpha$. It is noticed that, if channel coefficients are fixed, both scheme 1 and scheme 2 are suitable; if channel coefficients are time variant, only scheme 1 is suitable. Next we give a discussion on the differences scheme 1 and scheme 2 with different, but fixed, values of $\mathbf{h}$. 

Recall the equivalent noise $\Tilde{\mathbf{z}}$ in \eqref{equ_CF_message_equi_noise}. Except the scaled Gaussian noise $\alpha \mathbf{z}$, another term $\sum_{i= 1}^L (\alpha h_{i}- a_{i}) \mathbf{x}_i$, referred to as integer approximation error, is nonnegligible in CF relaying. Since large $\alpha$ scales the Gaussian noise, it is desired to use a relatively small $\alpha$ by which $\alpha \mathbf{h}$ approximates an integer vector $\mathbf{a}$. Error correction of retry decoding is to use new coefficient set to reduce $\Tilde{\mathbf{z}}$ by reducing either $\alpha \mathbf{z}$ or $(\alpha \mathbf{h}- \mathbf{a})$.

Given a channel $\mathbf{h}$, we say a `good' approximation exists if a `small' integer approximation error can be achieved by using a `small' MMSE $\alpha$. First, suppose a good approximation exists for a given $\mathbf{h}$ with a coefficient set $\{\mathbf{a}, \alpha\}$. Retry using a different $\mathbf{a}'$ may significantly increase the integer approximation error; and having no guarantee on reducing the scaled Gaussian noise to overcome the extra integer approximation error. On the other hand, the decoder must select a coefficient set $\{\mathbf{a}, \alpha\}$ with small integer approximation error but large $\alpha$; or large integer approximation error but small $\alpha$, where either contributes to the equivalent noise $\Tilde{\mathbf{z}}$ in a different way. In this case, changing $\mathbf{a}$ and $\alpha$ as a set explores more possibilities of $\Tilde{\mathbf{z}}$ to correct more errors. It is noticed that for scheme 1, changing coefficient set $\{\mathbf{a}, \alpha\}$ may significantly change the equivalent noise, which can be seen as a relatively aggressive strategy. While, scheme 2 is relatively conservative since the equivalent noise can be changed continuously if $\alpha$ is changed continuously over the real number space. 
Intuitively, we expect that scheme 2 is more suitable for retry decoding than scheme 1 if a good approximation exists for a given $\mathbf{h}$; otherwise scheme 1 is more suitable. The following numerical example justifies the discussion above.

\begin{myExp} \label{exp_CF_fixh_diff_dec_scheme}
    \rm Consider two-user case with normalized channel gain $\|\mathbf{h}\|= 1$. Let $\mathbf{h}_1= [0.6095, 0.7928]^T$ and $\mathbf{h}_2= [0.4299, 0.9029]^T$, where $\mathbf{h}_1$ has a poor approximation with $h_{1, 1}/ h_{1, 2} \approx 1/1.3$ and $\mathbf{h}_2$ has a good approximation with $h_{2, 1}/ h_{2, 2} \approx 1/2.1$. Suppose SNR=30dB and Gaussian noise variance $\sigma^2= 1$. The optimal and second-best coefficient set for $\mathbf{h}_1$ are $\{[3, 4]^T, 4.9946\}$ and $\{[1, 1]^T, 1.4009\}$; for $\mathbf{h}_2$ are $\{[1, 2]^T, 2.2334\}$ and $\{[2, 5]^T, 5.3688\}$, respectively. Compute the equivalent noise variance using $N_{e}= \alpha^2 \sigma^2+ P \|\alpha \mathbf{h}- \mathbf{a}\|^2$ \cite{nazer2011compute}, where $P$ is given in \eqref{equ_average_power}. For $\mathbf{h}_1$, $N_{e, 1, opt} \approx 28.5230$ and $N_{e, 1, sec} \approx 35.5625$; for $\mathbf{h}_2$, $N_{e, 2, opt} \approx 6.8416$ and $N_{e, 2, sec} \approx 147.1523$. 
    When scheme 1 is applied, or $\mathbf{h}_1$, due to the poor integer approximation, the optimal coefficient set gives larger $N_{e}$ than $\mathbf{h}_2$. However, the increase of $N_{e}$ for $\mathbf{h}_1$ is less significant when the second-best coefficient set is applied for retry. 
    Fig.~\ref{fig_BW16_fixh_diff_ratio_compare} shows the EER for $\mathbf{h}_1$ and $\mathbf{h}_2$ using $BW_{16}$ lattice code as used in Fig.~\ref{fig_E8_BW16_example_SU}. It is observed that for $\mathbf{h}_1$, scheme 1 achieves larger gain than scheme 2; for $\mathbf{h}_2$, scheme 1 achieves almost no gain while scheme 1 achieves almost same gain as that of $\mathbf{h}_1$, which justifies the discussion given above. 

    \begin{figure}
        \centering
        \includegraphics[width=0.9\linewidth]{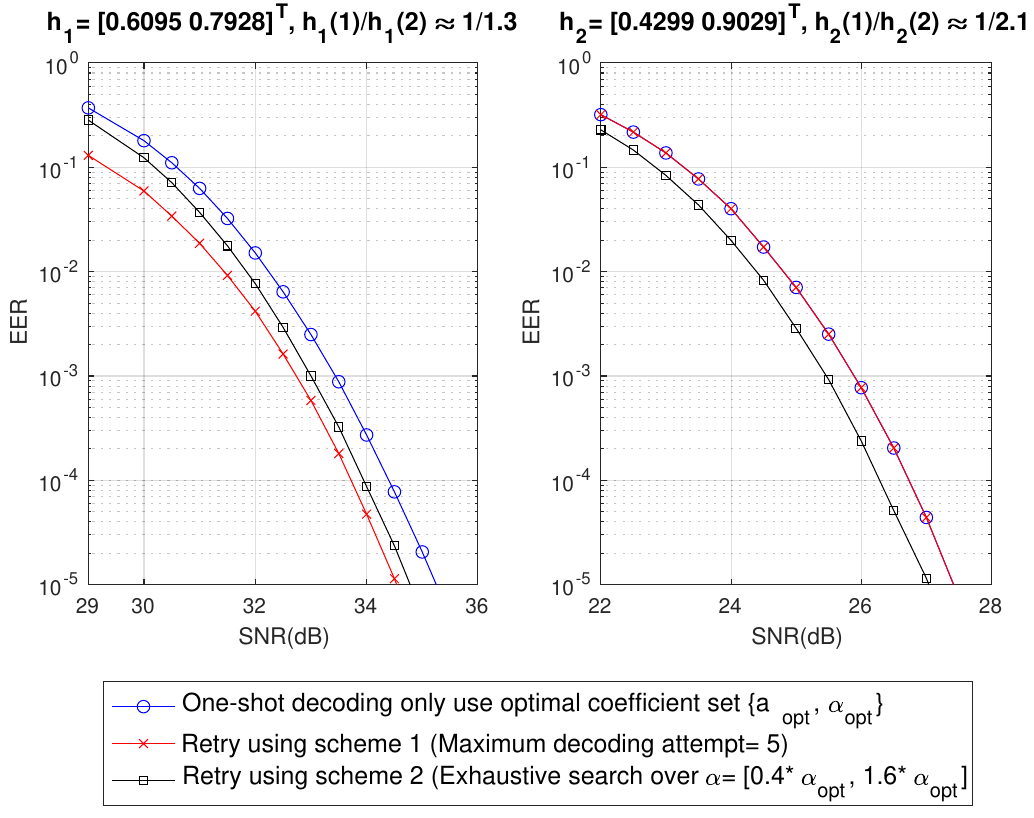}
        \caption{Comparison of EER performance for $\mathbf{h}_1= [0.6095, 0.7928]^T$ and $\mathbf{h}_2= [0.4299, 0.9029]^T$. Maximum number of decoding attempts is set to be sufficient large for each scheme. $BW_{16}$ lattice code is applied as used in Fig.~\ref{fig_E8_BW16_example_SU} and error detection is genie aided.}
        \label{fig_BW16_fixh_diff_ratio_compare}
    \end{figure}

\end{myExp}

}

\section{Lattice construction with error detection} \label{sec_code_construct}
This section gives a lattice construction which adds physical layer error detection ability to perform the retry decoding discussed in Section~\ref{sec_su} and Section~\ref{sec_mac_CFrelay}. The proposed lattice construction is given by restricting the least significant bits (LSB) of lattice uncoded messages using a binary linear block code (LBC) $\mathcal{C}_b$, referred as an LBC-embedded lattice. For practical lattice design, a CRC-embedded lattice is considered as a special case of LBC-embedded lattices, by letting $\mathcal{C}_b$ be a CRC code. For CF relaying using the ICF scheme, a condition on the shaping lattice design is given so that the relay can detect errors in the linear combinations without knowledge of the individual users' messages. Lastly, the probability of undetected error is analyzed to measure the error detection capability, along with methods to estimate it. 

\subsection{Lattice construction} \label{sec_code_construct_def}
\begin{myDef} \label{def_CRC_lattice}
    \rm (LBC-embedded lattice) For an $N$-dimensional lattice $\Lambda$ with generator matrix $\mathbf{G}$, and a binary linear block code $\mathcal{C}_b$ with block length $N$, the constellation $\Lambda'$ after embedding $\mathcal{C}_b$ into $\Lambda$ is defined as:
    \begin{align} \label{equ_def_lambda_p}
        \Lambda'= \{\mathbf{G} \mathbf{b} | \mathbf{b} \in \mathbb{Z}^N, \mathbf{b}_{LSB} \in \mathcal{C}_b\},
    \end{align}
    where the LSB vector $\mathbf{b}_{LSB}$ is obtained by the $\bmod 2$ operation as:
    \begin{align} \label{equ_LSB_extract}
        \mathbf{b}_{LSB}= \mathbf{b} \bmod 2.  
    \end{align}
\end{myDef}

In later discussion in this paper, we refer to the lattice $\Lambda$ before embedding $\mathcal{C}_b$ as the \emph{base lattice} to distinguish it from the LBC-embedded lattice $\Lambda'$. Although the definition does not explicitly show that $\Lambda'$ forms a lattice, Theorem~\ref{theo_CRC_form_lattice} states that $\Lambda'$ is indeed a lattice by showing that $\Lambda'$ forms an additive subgroup as in Definition~\ref{def_lattice}.
\begin{myTheor}
 \label{theo_CRC_form_lattice}
    \rm The constellation $\Lambda'$ defined in \eqref{equ_def_lambda_p} is a sublattice of its base lattice $\Lambda$.
\end{myTheor}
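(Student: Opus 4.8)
The plan is to check $\Lambda'$ against Definition~\ref{def_lattice} directly, namely that it is a discrete additive subgroup of $\mathbb{R}^N$ contained in $\Lambda$. By construction every element of $\Lambda'$ has the form $\mathbf{G}\mathbf{b}$ with $\mathbf{b} \in \mathbb{Z}^N$, so $\Lambda' \subseteq \Lambda$ and discreteness is inherited from the base lattice for free. I would also record that $\Lambda'$ contains $2\Lambda = \{\mathbf{G}(2\mathbf{b}) : \mathbf{b} \in \mathbb{Z}^N\}$, because any even integer vector satisfies $(2\mathbf{b}) \bmod 2 = \mathbf{0} \in \mathcal{C}_b$; since $2\Lambda$ already spans $\mathbb{R}^N$, so does $\Lambda'$, confirming it has full rank $N$. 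The substance of the theorem therefore reduces to verifying closure of $\Lambda'$ under addition and negation together with membership of the origin.

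The key algebraic fact I would isolate first is that the LSB-extraction map $\phi(\mathbf{b}) = \mathbf{b} \bmod 2$ of \eqref{equ_LSB_extract}, read as a map from $(\mathbb{Z}^N, +)$ to $(\mathbb{F}_2^N, +)$, is a group homomorphism: for any $\mathbf{b}_1, \mathbf{b}_2 \in \mathbb{Z}^N$, the LSB vector of the integer sum equals the componentwise mod-$2$ sum of the two individual LSB vectors, i.e.\ $(\mathbf{b}_1 + \mathbf{b}_2) \bmod 2 = \big((\mathbf{b}_1 \bmod 2) + (\mathbf{b}_2 \bmod 2)\big) \bmod 2$. Because $\mathcal{C}_b$ is a binary linear block code it is closed under this mod-$2$ addition and contains the all-zero word, so whenever $\mathbf{b}_1 \bmod 2, \mathbf{b}_2 \bmod 2 \in \mathcal{C}_b$ their mod-$2$ sum, and hence $(\mathbf{b}_1 + \mathbf{b}_2) \bmod 2$, again lies in $\mathcal{C}_b$.

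Closure then follows immediately. For $\mathbf{x}_1 = \mathbf{G}\mathbf{b}_1$ and $\mathbf{x}_2 = \mathbf{G}\mathbf{b}_2$ in $\Lambda'$, the sum is $\mathbf{x}_1 + \mathbf{x}_2 = \mathbf{G}(\mathbf{b}_1 + \mathbf{b}_2)$ with $\mathbf{b}_1 + \mathbf{b}_2 \in \mathbb{Z}^N$, whose LSB vector stays in $\mathcal{C}_b$ by the argument above, giving $\mathbf{x}_1 + \mathbf{x}_2 \in \Lambda'$. For negation, $-\mathbf{x}_1 = \mathbf{G}(-\mathbf{b}_1)$ and $(-\mathbf{b}_1) \bmod 2 = \mathbf{b}_1 \bmod 2 \in \mathcal{C}_b$ since $-1 \equiv 1 \pmod 2$, so $-\mathbf{x}_1 \in \Lambda'$; and $\mathbf{0} = \mathbf{G}\mathbf{0}$ has zero LSB vector, which is in $\mathcal{C}_b$ by linearity. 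These establish that $\Lambda'$ is an additive subgroup of $\Lambda$, which with inherited discreteness and full rank makes it a sublattice.

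I do not expect a genuine obstacle; the argument is a subgroup-closure check. The single point that must be stated with care is the commutation of integer addition with the mod-$2$ reduction, because it is precisely the linearity of $\mathcal{C}_b$ under mod-$2$ addition—not under integer addition of the unreduced messages—that forces closure. A cleaner structural restatement, which I might include, is that $\{\mathbf{b} \in \mathbb{Z}^N : \mathbf{b} \bmod 2 \in \mathcal{C}_b\} = \phi^{-1}(\mathcal{C}_b)$ is the preimage of a subgroup under the homomorphism $\phi$, hence a subgroup of $\mathbb{Z}^N$, and that $\mathbf{b} \mapsto \mathbf{G}\mathbf{b}$ is an injective homomorphism carrying this preimage onto $\Lambda'$; this makes the group structure of $\Lambda'$ transparent in one line.
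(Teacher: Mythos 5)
Your proof is correct and follows essentially the same route as the paper's: verify the subgroup axioms for $\Lambda'$, with closure resting on the fact that mod-$2$ reduction commutes with integer addition together with the linearity of $\mathcal{C}_b$, and with the identity and inverse checks identical to the paper's. Your additional observations --- discreteness inherited from $\Lambda$, full rank via $2\Lambda \subseteq \Lambda'$, and the view of $\{\mathbf{b} : \mathbf{b} \bmod 2 \in \mathcal{C}_b\} = \phi^{-1}(\mathcal{C}_b)$ as the preimage of a subgroup under a homomorphism --- are sound refinements the paper leaves implicit (rank is only settled later by the explicit generator matrix in Proposition~\ref{prop_G_CRC_lattice}), not a different argument.
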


\begin{proof}
First, $\Lambda' \subseteq \Lambda$ is straightforward from \eqref{equ_def_lambda_p} as the domain of uncoded messages of $\Lambda'$ is a subset of that of $\Lambda$.

By Definition~\ref{def_lattice}, $\Lambda'$ is a lattice if $\Lambda'$ forms an additive subgroup in $\mathbb{R}^N$ that has: a) identity element; b) inverse element; c) associativity; d) commutativity; e) closure. Let $\mathbf{x} \in \Lambda'$ and the corresponding uncoded message be $\mathbf{b}= \mathbf{G}^{-1} \mathbf{x}$.
The \emph{identity element} is $\mathbf{0} \in \Lambda'$ because the LSB vector of the all-zero vector is always a codeword of $\mathcal{C}_b$. For the \emph{inverse element}, given $\mathbf{x} \in \Lambda'$, $- \mathbf{x}$ has same LSB vector as $\mathbf{x}$, by which $- \mathbf{x} \in \Lambda'$. \emph{Associativity} and \emph{commutativity} are trivial because the addition operation is over the real number space. \emph{Closure} is obtained by the linearity of $\mathcal{C}_b$. Let $\mathbf{x}_1, \mathbf{x}_2 \in \Lambda'$, $\mathbf{b}_1= \mathbf{G}^{-1} \mathbf{x}_1$ and $\mathbf{b}_2= \mathbf{G}^{-1} \mathbf{x}_2$. The LSB vector of $\mathbf{b}_1+ \mathbf{b}_2$ is:
\begin{align}
    (\mathbf{b}_1+ \mathbf{b}_2) \bmod 2 = & (\mathbf{b}_1 \bmod 2+ \mathbf{b}_2 \bmod 2) \bmod 2 \nonumber \\
    = & (\mathbf{b}_{1, LSB}+ \mathbf{b}_{2, LSB}) \bmod 2,
\end{align}
where $\mathbf{b}_{1, LSB}$ and $\mathbf{b}_{2, LSB}$ are the LSB of $\mathbf{b}_1$ and $\mathbf{b}_2$. By the linearity of $\mathcal{C}_b$, the LSB vector of $\mathbf{b}_1+ \mathbf{b}_2$ is also in $\mathcal{C}_b$. Therefore, $\mathbf{x}_1+ \mathbf{x}_2 \in \Lambda'$. This concludes the proof.
\end{proof}

Since $\Lambda'$ is a lattice, a generator matrix $\mathbf{G}'$ is found using the generator matrix of the base lattice $\Lambda$ and binary code $\mathcal{C}_b$. We first show that the uncoded message $\mathbf{b}$ of $\Lambda'$ can be considered as a construction A lattice lifted by $\mathcal{C}_b$.
\begin{myDef} \label{def_Cons_A}
    \rm (Construction A lattice \cite{zamir2014lattice}) For a $q$-ary code $\mathcal{C}_q$, a modulo-$q$ lattice is formed as:
    \begin{align}
        \Lambda_a= \mathcal{C}_q+ q \mathbb{Z}^N.
    \end{align}
\end{myDef}
It is noticed that the domain of the uncoded message $\mathbf{b}$ in \eqref{equ_def_lambda_p} can be written as $\mathcal{C}_b + 2 \mathbb{Z}^N$, which is a construction A lattice with $q= 2$. 

\begin{myProposition}  \label{prop_G_CRC_lattice}
    \rm Let a base lattice $\Lambda$ have a generator matrix $\mathbf{G}$ and binary code $\mathcal{C}_b$ have a lower triangular generator matrix $\mathbf{G}_b= \begin{bmatrix}
    \mathbf{T} \\
    \mathbf{P}
\end{bmatrix}$, where $\mathbf{T}$ is a $k \times k$ lower triangular matrix. A generator matrix of LBC-embedded lattice $\Lambda'$ is:
\begin{align} \label{equ_G_Lambda_p}
    \mathbf{G}'= \mathbf{G} \begin{bmatrix}
        \mathbf{T} & {{}\mathbf{0}_{k \times (N- k)}} \\
        \mathbf{P} & 2 \mathbf{I}_{N-k}
    \end{bmatrix},
\end{align}
where ${{}\mathbf{0}_{k \times (N- k)}}$ is $k \times (N- k)$ all-zero matrix and $\mathbf{I}_{N-k}$ is the $(N-k)$-dimensional identity matrix.
\end{myProposition}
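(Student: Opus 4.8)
The plan is to reduce the statement to finding a basis of the Construction~A lattice attached to $\mathcal{C}_b$. As observed just before the proposition, the admissible message set $\{\mathbf{b}\in\mathbb{Z}^N : \mathbf{b}\bmod 2 \in \mathcal{C}_b\}$ equals the Construction~A lattice $\Lambda_a = \mathcal{C}_b + 2\mathbb{Z}^N$, so that $\Lambda' = \{\mathbf{G}\mathbf{u} : \mathbf{u}\in\Lambda_a\}$. Consequently, if I can show that the inner matrix $\mathbf{G}_a = \begin{bmatrix} \mathbf{T} & \mathbf{0}_{k\times(N-k)} \\ \mathbf{P} & 2\mathbf{I}_{N-k}\end{bmatrix}$ is a generator matrix of $\Lambda_a$, meaning $\Lambda_a = \{\mathbf{G}_a\mathbf{b} : \mathbf{b}\in\mathbb{Z}^N\}$, then $\Lambda' = \{\mathbf{G}\mathbf{G}_a\mathbf{b} : \mathbf{b}\in\mathbb{Z}^N\}$ and the claimed identity $\mathbf{G}'=\mathbf{G}\mathbf{G}_a$ of \eqref{equ_G_Lambda_p} follows at once. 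Everything therefore rests on identifying $\mathbf{G}_a$ as a basis of $\Lambda_a$.

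First I would establish the easy inclusion $\{\mathbf{G}_a\mathbf{b}\} \subseteq \Lambda_a$. Writing $\mathbf{b}$ in block form with top part $\mathbf{b}^{(1)}\in\mathbb{Z}^{k}$ and bottom part $\mathbf{b}^{(2)}\in\mathbb{Z}^{N-k}$, one gets $\mathbf{G}_a\mathbf{b} = \mathbf{G}_b\,\mathbf{b}^{(1)} + [\mathbf{0}^{T},\,2(\mathbf{b}^{(2)})^{T}]^{T}$, an integer vector whose last $N-k$ coordinates receive only even contributions. Reducing modulo $2$ therefore leaves $\mathbf{G}_b\,\mathbf{b}^{(1)}$, which is the reduction of an integer combination of the columns of $\mathbf{G}_b$, hence an $\mathbb{F}_2$-linear combination of those columns and thus a codeword of $\mathcal{C}_b$. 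This gives $\mathbf{G}_a\mathbf{b}\in\mathcal{C}_b + 2\mathbb{Z}^N=\Lambda_a$.

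For the reverse direction I would argue by covolume rather than by an explicit coset computation. Because $\mathbf{G}_a$ is block lower triangular, $|\det\mathbf{G}_a| = |\det\mathbf{T}|\cdot|\det(2\mathbf{I}_{N-k})| = 2^{N-k}|\det\mathbf{T}|$; since $\mathbf{G}_b$ is a full-column-rank generator matrix presented in lower triangular form, $\mathbf{T}$ is lower triangular with unit diagonal, so its integer determinant is $1$ and $|\det\mathbf{G}_a| = 2^{N-k}$. On the other hand the index of $\Lambda_a=\mathcal{C}_b+2\mathbb{Z}^N$ in $\mathbb{Z}^N$ is $2^N/|\mathcal{C}_b| = 2^{N-k}$, so $V(\Lambda_a)=2^{N-k}$ as well. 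A full-rank integer sublattice contained in $\Lambda_a$ with the same covolume must coincide with $\Lambda_a$; combined with the inclusion of the previous paragraph this yields $\{\mathbf{G}_a\mathbf{b}\}=\Lambda_a$ and completes the argument.

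The step I expect to be most delicate is justifying $\det\mathbf{T}=1$, that is, that the triangular block has unit diagonal. This is precisely where the hypothesis that $\mathbf{G}_b$ is a genuine full-rank generator matrix in lower triangular form is indispensable: without it, $\mathbf{G}_a$ could generate a proper sublattice of $\Lambda_a$ of strictly larger covolume, and the proposition would fail. If one prefers to avoid the determinant route, the reverse inclusion can instead be obtained constructively by first proving $2\mathbb{Z}^N\subseteq\{\mathbf{G}_a\mathbf{b}\}$ via a downward induction on the coordinate index that peels off the strictly-lower-triangular entries of $\mathbf{T}$ using its unit diagonal, and then noting that modulo $2\mathbb{Z}^N$ the columns of $\mathbf{G}_a$ reduce exactly to a generating set of $\mathcal{C}_b$; both formulations hinge on the same unit-diagonal property.
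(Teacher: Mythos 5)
Your proposal is correct and follows the same overall route as the paper: both identify the admissible message set $\{\mathbf{b}\in\mathbb{Z}^N : \mathbf{b}\bmod 2\in\mathcal{C}_b\}$ with the Construction~A lattice $\Lambda_a=\mathcal{C}_b+2\mathbb{Z}^N$ and conclude $\mathbf{G}'=\mathbf{G}\mathbf{G}_a$. The difference lies in how the inner matrix $\mathbf{G}_a$ is certified as a basis of $\Lambda_a$: the paper simply cites \cite{zamir2014lattice} for this fact and stops, whereas you prove it from scratch --- the forward inclusion by reducing $\mathbf{G}_a\mathbf{b}$ modulo $2$ (the bottom block contributes only even entries, leaving an $\mathbb{F}_2$-combination of columns of $\mathbf{G}_b$), and the reverse inclusion by a covolume comparison, $|\det\mathbf{G}_a|=2^{N-k}|\det\mathbf{T}|=2^{N-k}=[\mathbb{Z}^N:\Lambda_a]$, so the full-rank sublattice generated by $\mathbf{G}_a$ has index $1$ in $\Lambda_a$ and equals it. You are also right to single out $\det\mathbf{T}=1$ as the crux: since $\mathbf{G}_b$ must have rank $k$ over $\mathbb{F}_2$ and $\mathbf{T}$ is lower triangular with $0/1$ entries, every diagonal entry of $\mathbf{T}$ is forced to be $1$ --- a point the paper leaves entirely implicit inside its citation, and without which $\mathbf{G}_a$ would only generate a proper sublattice. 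In short, your argument buys self-containedness and makes the hidden hypothesis explicit, at the cost of a few lines; the paper's version is shorter because it leans on the standard Construction~A result.
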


\begin{proof}
    For $\mathcal{C}_b$ having a lower triangular $\mathbf{G}_b$, a generator matrix of construction A lattice $\Lambda_a= \mathcal{C}_b + 2 \mathbb{Z}^N$ is given as \cite{zamir2014lattice}:
    \begin{align}
        \mathbf{G}_a= \begin{bmatrix}
            \mathbf{T} & {{}\mathbf{0}_{k \times (N- k)}} \\
            \mathbf{P} & 2 \mathbf{I}_{N-k}
        \end{bmatrix}.
    \end{align}
    The uncoded message satisfying the condition in \eqref{equ_def_lambda_p} can then be expressed as $\mathbf{b}= \mathbf{G}_a \mathbf{b}'$ with $\mathbf{b}' \in \mathbb{Z}^N$, from which the lattice point of $\Lambda'$ is $\mathbf{x}= \mathbf{G} \mathbf{G}_a \mathbf{b}'$. By Definition~\ref{def_lattice}, a generator matrix of $\Lambda'$ is given as $\mathbf{G}'= \mathbf{G} \mathbf{G}_a$.
\end{proof}
%\end{comment}

\subsection{Encoding and decoding schemes} \label{sec_code_construct_endec}

\begin{figure}[t]
    \centering
    \subfloat[Encoder model]{
        \includegraphics[width=0.9\linewidth]{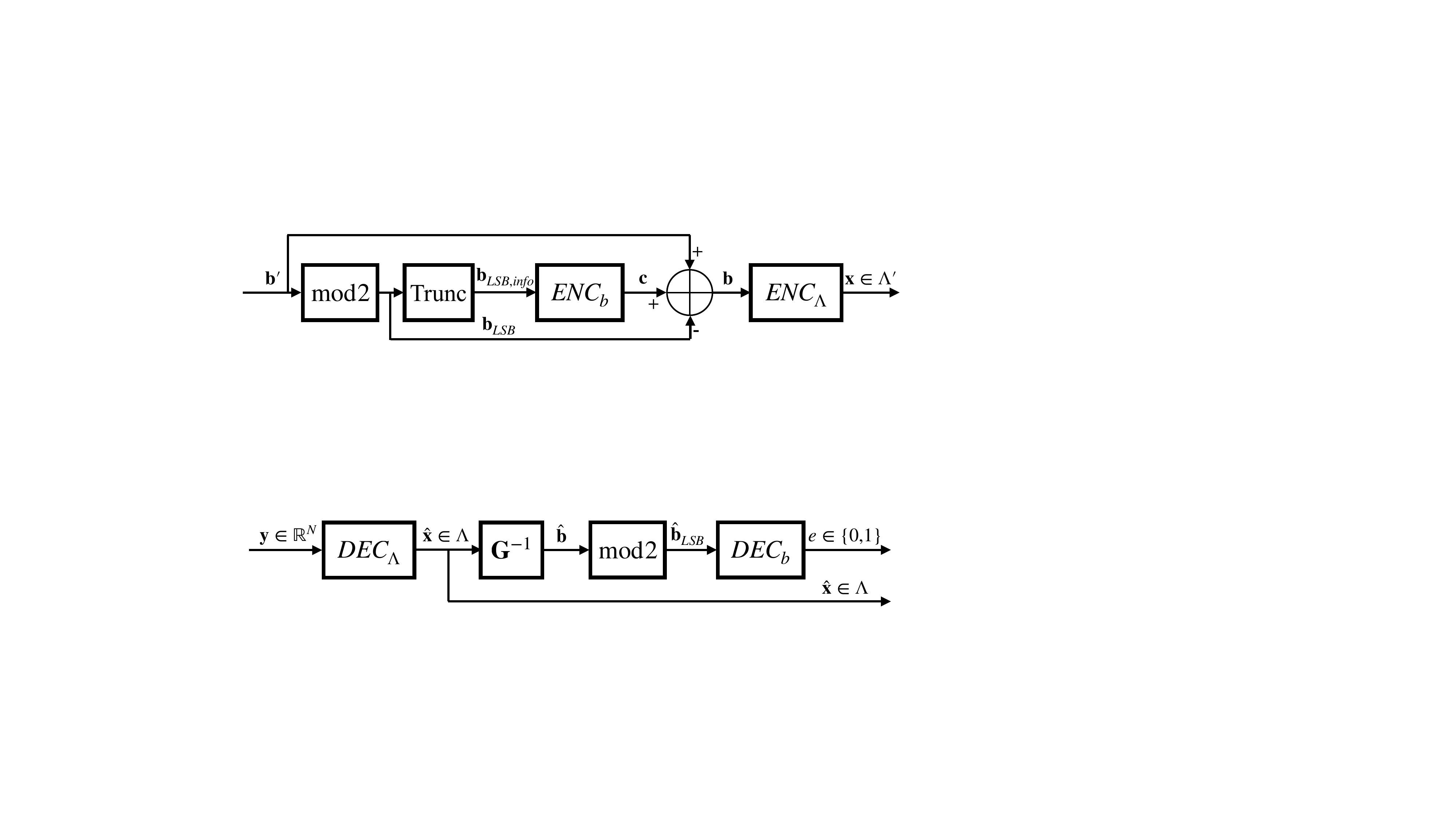}
        \label{fig_CRC_embed_enc}}
        %\caption{Encoder model}}
    
    \subfloat[Decoder model]{
        \includegraphics[width=0.9\linewidth]{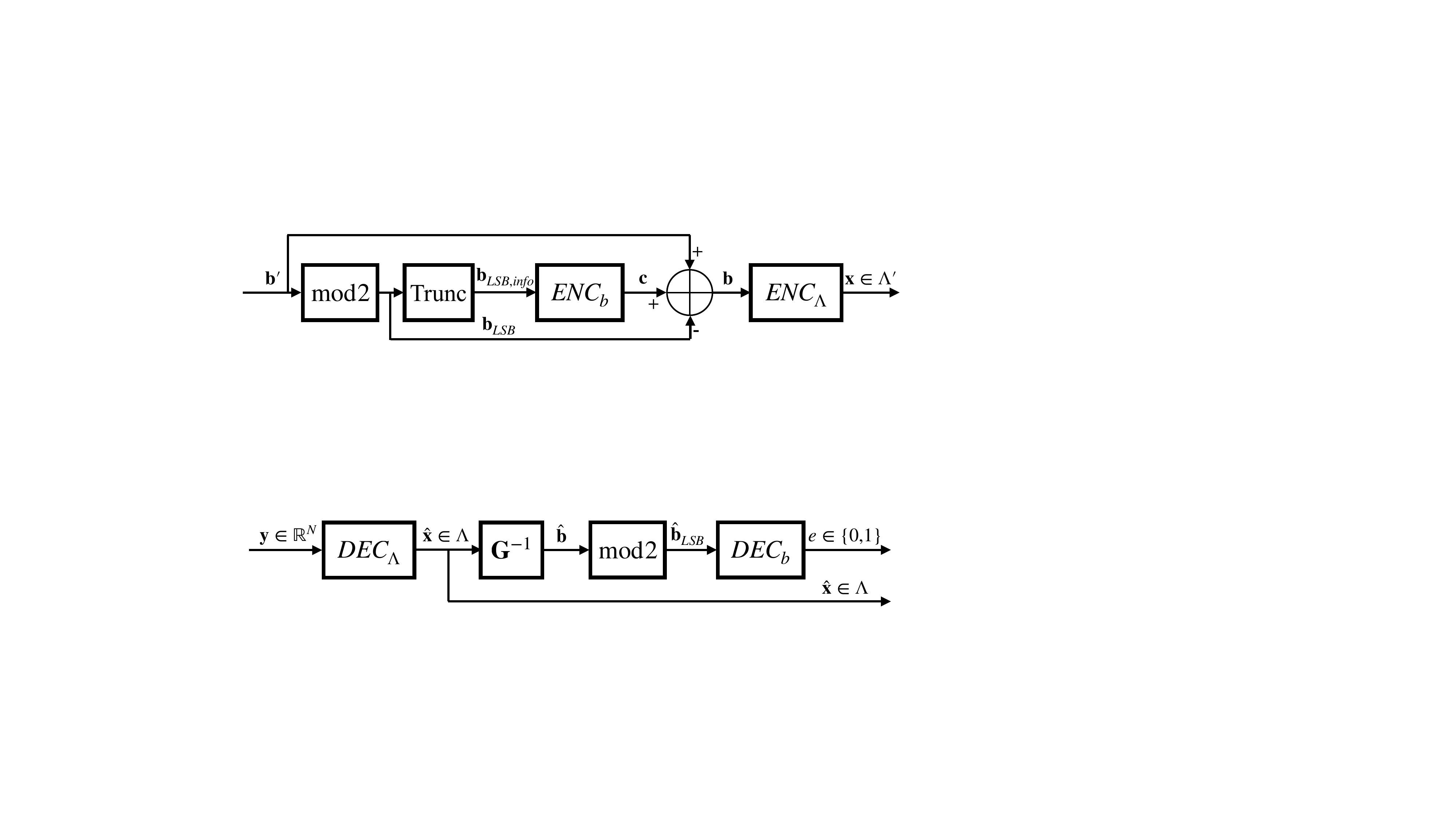}
        \label{fig_CRC_embed_dec}}
        %\caption{Decoder model}}
    \caption{Encoder and decoder model for LBC-embedded lattice.}
    \label{fig_CRC_embed_endec}
\end{figure}
Next encoding and decoding schemes are introduced for implementing physical layer error detection using LBC-embedded lattices. The encoder and decoder models are shown in Fig.~\ref{fig_CRC_embed_endec}. Embedding $\mathcal{C}_b$ is equivalent to removing some lattice points from the base lattice; however, the receiver uses the base lattice for decoding followed by a parity check of $\mathcal{C}_b$. $ENC_{\Lambda}$/$DEC_{\Lambda}$ and $\mathbf{G}^{-1}$ corresponds to the base lattice $\Lambda$, not the LBC-embedded lattice $\Lambda'$, and $ENC_{b}$/$DEC_{b}$ are the binary encoder/parity check of $\mathcal{C}_b$ for error detection. Denote $\mathcal{I}$ as a set of indices of $\mathbf{b}_{LSB}$ indicating the information bits for encoding $\mathcal{C}_b$. The user's message $\mathbf{b}'= [b_1', b_2', \cdots, b_N']^T$ before embedding $\mathcal{C}_b$ is defined as $b_i' \in \mathbb{Z}$ for $i \in \mathcal{I}$ and $b_i' \in 2\mathbb{Z}$ for $i \not\in \mathcal{I}$. The LSB vector $\mathbf{b}_{LSB}$ is obtained by $\bmod~2$. 
{{} The \emph{Trunc} function truncates $\mathbf{b}_{LSB}$ into $\mathbf{b}_{LSB, info}$ according to $\mathcal{I}$ to generate the input of binary encoder $ENC_b$. Then, an $\oplus$ operation combines $\mathbf{b}', \mathbf{b}_{LSB}$ and the binary codeword $\mathbf{c} \in \mathcal{C}_b$ to compute the lattice uncoded message as $\mathbf{b}= \mathbf{b}'+ \mathbf{c}- \mathbf{b}_{LSB}$, where the addition and subtraction are in real number space.} At the decoder side, since the lattice decoder $DEC_{\Lambda}$ uses $\Lambda$ instead of $\Lambda'$, a decoding algorithm for $\Lambda'$ does not need to be specified and the estimate is $\hat{\mathbf{x}} \in \Lambda$. The parity check $DEC_{b}$ following gives a 1-bit pass/fail output $e$ indicating if $\hat{\mathbf{b}}_{LSB} \in \mathcal{C}_b$ and $\hat{\mathbf{x}} \in \Lambda'$.

\subsection{Lattice codes using LBC-embedded lattices} \label{sec_code_construct_latticeC}
Forming lattice codes using LBC-embedded lattices is straightforward. In this paper, the binary code $\mathcal{C}_b$ is only embedded into the coding lattice, that is, with respect to a base lattice code $\mathcal{C}= \Lambda_c/ \Lambda_s$, the LBC-embedded lattice code is $\mathcal{C}'= \Lambda_c'/ \Lambda_s$. Applying the rectangular encoding described in Section~\ref{sec_nest_lattice}, the $i$-th component of uncoded message of $\mathcal{C}'$ is defined as: $b_i \in \mathcal{M}_i$ if $i \in \mathcal{I}$; $b_i \in \{b | b \in \mathcal{M}_i, b \bmod 2= 0\}$ if $i \not\in \mathcal{I}$, where $\mathcal{I}$ is set of indices of $\mathbf{b}_{LSB}$ indicating the information bits for encoding $\mathcal{C}_b$.
The encoding/decoding scheme follows Fig.~\ref{fig_CRC_embed_endec} by additionally including the shaping operation. Since lattice codes only contain a finite number of bits per message, the SNR penalty is non-negligible and defined as follows. 
\begin{myDef} \label{def_R_SNR_penalty}
    \rm (SNR penalty) Let a base lattice code $\mathcal{C}$ have code rate $R$ and $\mathcal{C}_b$ for embedding have $l$ parity bits. The SNR penalty $SNR_p$ of the LBC-embedded lattice code $\mathcal{C}'$ is:
    \begin{align} 
        SNR_p & = 10 \log_{10} \frac{R}{R'}(\rm{dB}), \label{equ_SNR_penalty}
    \end{align}
    where $R'= \frac{NR- l}{N}$ is the code rate of $\mathcal{C}'$.
\end{myDef}

An example is given to visualize the relationship with and without embedding $\mathcal{C}_b$.
\begin{myExp}
\rm Form $\Lambda'$ using $A_2$ lattice, with $\mathbf{G}= \big[\begin{smallmatrix}
  \sqrt{3}/ 2 & 0\\
  1/ 2 & 1
\end{smallmatrix}\big]$, and $\mathcal{C}_b$ being the single parity check code. Fig.~\ref{fig_A2lattice_CRC} illustrates the constellation of $\Lambda'$ and lattice code $\Lambda'/ 4 A_2$. For $\Lambda'$, a generator matrix $\mathbf{G}'= \big[\begin{smallmatrix}
  \sqrt{3}/ 2 & 0\\
  3/ 2 & 2
\end{smallmatrix}\big]$.
\begin{figure}[t]
    \centering
    \includegraphics[width=0.9\linewidth]{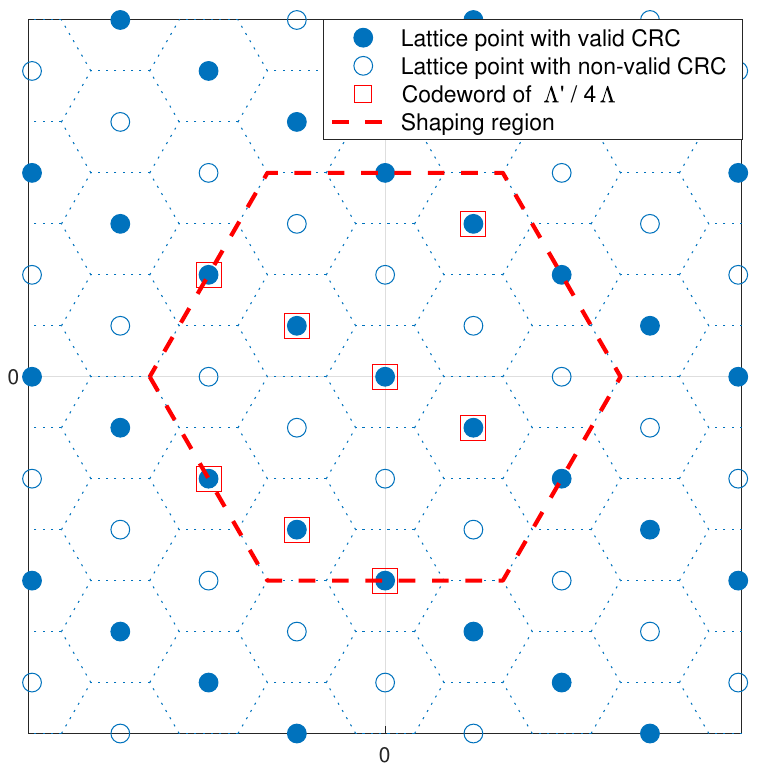}
    \caption{Constellation of A2 lattice/lattice code with single parity check code embedded.}
    \label{fig_A2lattice_CRC}
\end{figure}
\end{myExp}

\subsection{Shaping lattice design for CF relaying}
For CF relaying using ICF, a condition on designing the shaping lattice is given, by which a stand-alone relay can detect errors in linear combinations without requiring knowledge of individual users' messages. Recall that a relay performing ICF estimates a linear combination without $\bmod\ \Lambda_s$ as $\hat{\mathbf{x}}= \sum_{i= 1}^L a_i \mathbf{x}_i$, where $\hat{\mathbf{x}}$ is a member of coding lattice $\Lambda_c$ but not necessarily a member of lattice code $\mathcal{C}$. The LSB vector of $\hat{\mathbf{x}}$ is then obtained by $\hat{\mathbf{b}}_{LSB}= \mathbf{G}^{-1} \hat{\mathbf{x}} \bmod 2$. The validity of the parity check of the users' uncoded messages $\mathbf{b}_1, \mathbf{b}_2, \cdots, \mathbf{b}_L$ needs to be preserved at $\hat{\mathbf{b}}_{LSB}$ after transmission.
First, Lemma~\ref{lemma_CRC_CF} shows the validity is preserved if the transmission is uncoded, that is linearly combining uncoded messages $\mathbf{b}_i$ directly.
\begin{myLemma} \label{lemma_CRC_CF}
    \rm If $\mathbf{b}_i \bmod2 \in \mathcal{C}_b$ for $i= 1, 2, \cdots, L$, then $\sum_{i= 1}^L a_i \mathbf{b}_i \bmod 2 \in \mathcal{C}_b$ for arbitrary integers $a_1,a_2,\cdots,a_L$.
\end{myLemma}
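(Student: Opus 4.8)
The plan is to lean on two elementary facts: reduction modulo~$2$ is a ring homomorphism, so it commutes with integer linear combinations taken componentwise, and $\mathcal{C}_b$ is closed under addition over $\mathbb{F}_2$ because it is a linear block code. Everything else is bookkeeping.

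First I would push the mod-$2$ reduction inside the summation. Since the componentwise map $\mathbb{Z}^N \to \mathbb{F}_2^N$ given by $\bmod\,2$ is both additive and multiplicative, we obtain
\begin{align}
    \sum_{i=1}^L a_i \mathbf{b}_i \bmod 2 = \left( \sum_{i=1}^L (a_i \bmod 2)\,(\mathbf{b}_i \bmod 2) \right) \bmod 2 .
\end{align}
This is the key move, because it replaces the arbitrary integer coefficients $a_i$ by their parities $a_i \bmod 2 \in \{0,1\}$, which are exactly the scalars available in $\mathbb{F}_2$.

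Next, I would observe that for each $i$ the term $(a_i \bmod 2)(\mathbf{b}_i \bmod 2)$ is either the all-zero vector, when $a_i$ is even, or equals $\mathbf{b}_i \bmod 2$, when $a_i$ is odd. In both cases it is a codeword of $\mathcal{C}_b$: the odd case uses the hypothesis $\mathbf{b}_i \bmod 2 \in \mathcal{C}_b$, and the even case uses the fact that the zero vector lies in every linear code. Finally, invoking the linearity of $\mathcal{C}_b$ — its closure under $\mathbb{F}_2$-addition — the mod-$2$ sum of these codewords is again a codeword of $\mathcal{C}_b$, which is precisely the claim.

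There is essentially no obstacle here; the statement is a direct consequence of the homomorphism property of $\bmod\,2$ combined with the linearity of $\mathcal{C}_b$. The one subtlety worth flagging is that the $a_i$ are \emph{arbitrary} integers rather than bits, so one must first collapse each coefficient to its parity via the homomorphism before the linearity of $\mathcal{C}_b$ can be applied — an integer coefficient influences the LSB vector only through $a_i \bmod 2$.
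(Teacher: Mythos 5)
Your proof is correct and follows essentially the same route as the paper: the paper defers to the closure step in the proof of Theorem~2 (pushing the $\bmod\,2$ reduction inside the sum and invoking the $\mathbb{F}_2$-linearity of $\mathcal{C}_b$) and declares the extension to $L>2$ and arbitrary integers straightforward. You simply spell out that "straightforward" part explicitly --- collapsing each integer coefficient $a_i$ to its parity via the homomorphism property --- which is a faithful completion of the paper's argument rather than a different one.
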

This was shown for $L= 2$ and $a_1= a_2= 1$ when proving Theorem~\ref{theo_CRC_form_lattice}. The generalization to $L> 2$ and arbitrary integers $a_1,a_2,\cdots,a_L$ is straightforward. 

Then, we consider the lattice coded case. Let $\mathcal{C}= \Lambda_c/ \Lambda_s$, be the base lattice code before embedding binary code $\mathcal{C}_b$. Let $\mathbf{G}_c$ and $\mathbf{G}_s$ are generator matrices of $\Lambda_c$ and $\Lambda_s$, respectively. By Lemma~\ref{lemma_exist_M}, there exists an $\mathbf{M} \in \mathbb{Z}^{N \times N}$ by which $\mathbf{G}_s= \mathbf{G}_c \mathbf{M}$. 

\begin{myProposition} \label{prop_shaping_lattice_M}
    \rm Let uncoded messages $\mathbf{b}_i$ have $\mathbf{b}_{i, LSB} \in \mathcal{C}_b$ and corresponding codeword $\mathbf{x}_i$ for $i= 1, 2, \cdots, L$. The LSB vector $\hat{\mathbf{b}}_{LSB}= \left( \sum_{i= 1}^L a_i \mathbf{G}_c^{-1}  \mathbf{x}_i \right) \bmod 2 \in \mathcal{C}_b$ is satisfied for arbitrary $a_1, a_2, \cdots, a_L \in \mathbb{Z}$, if $\mathbf{M}$  only consists of even integers.
\end{myProposition}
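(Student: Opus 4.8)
The plan is to reduce the linear combination of decoded coordinate vectors $\sum_{i=1}^L a_i \mathbf{G}_c^{-1}\mathbf{x}_i$ to the linear combination of the raw uncoded messages $\sum_{i=1}^L a_i \mathbf{b}_i$ \emph{modulo} $2$, so that Lemma~\ref{lemma_CRC_CF} applies directly. The evenness of $\mathbf{M}$ is exactly the hypothesis that makes the shaping contribution vanish under the $\bmod\,2$ reduction.

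First I would express each $\mathbf{G}_c^{-1}\mathbf{x}_i$ in terms of $\mathbf{b}_i$. From the encoding rule \eqref{equ_latticeC_enc}, $\mathbf{x}_i = \mathbf{G}_c \mathbf{b}_i - Q_{\Lambda_s}(\mathbf{G}_c \mathbf{b}_i)$, and since $Q_{\Lambda_s}$ returns a point of $\Lambda_s$ there exists an integer vector $\mathbf{s}_i \in \mathbb{Z}^N$ with $Q_{\Lambda_s}(\mathbf{G}_c \mathbf{b}_i) = \mathbf{G}_s \mathbf{s}_i = \mathbf{G}_c \mathbf{M} \mathbf{s}_i$, using $\mathbf{G}_s = \mathbf{G}_c \mathbf{M}$ from Lemma~\ref{lemma_exist_M}. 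Left-multiplying by $\mathbf{G}_c^{-1}$ gives $\mathbf{G}_c^{-1}\mathbf{x}_i = \mathbf{b}_i - \mathbf{M}\mathbf{s}_i$, which is the step that brings the integer matrix $\mathbf{M}$ explicitly into the expression. Summing with the integer coefficients $a_i$ and setting $\mathbf{s} = \sum_{i=1}^L a_i \mathbf{s}_i \in \mathbb{Z}^N$ yields $\sum_{i=1}^L a_i \mathbf{G}_c^{-1}\mathbf{x}_i = \sum_{i=1}^L a_i \mathbf{b}_i - \mathbf{M}\mathbf{s}$. Reducing modulo $2$ and using that every entry of $\mathbf{M}$ is even — so $\mathbf{M}\mathbf{s}$ is an integer vector with all even components and $\mathbf{M}\mathbf{s} \bmod 2 = \mathbf{0}$ — collapses this to $\left(\sum_{i=1}^L a_i \mathbf{b}_i\right) \bmod 2$. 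Since each $\mathbf{b}_{i,LSB} = \mathbf{b}_i \bmod 2 \in \mathcal{C}_b$ by hypothesis, Lemma~\ref{lemma_CRC_CF} then gives $\hat{\mathbf{b}}_{LSB} \in \mathcal{C}_b$, completing the argument.

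There is no serious obstacle; the only point requiring care is bookkeeping the two distinct sources of integrality and substituting $\mathbf{G}_s = \mathbf{G}_c \mathbf{M}$ so that $\mathbf{G}_c^{-1}$ cancels cleanly. The even-$\mathbf{M}$ condition is what guarantees the shaping term is annihilated mod $2$: if $\mathbf{M}$ merely had integer entries — as Lemma~\ref{lemma_exist_M} requires for a valid nested code — then $\mathbf{M}\mathbf{s} \bmod 2$ could be a nonzero parity pattern corrupting the check, which clarifies why the stronger evenness hypothesis is imposed on the shaping lattice.
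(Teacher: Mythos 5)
Your proof is correct and follows essentially the same route as the paper's: both express $\mathbf{G}_c^{-1}\mathbf{x}_i = \mathbf{b}_i - \mathbf{M}\mathbf{s}_i$ via the encoding rule and $\mathbf{G}_s = \mathbf{G}_c\mathbf{M}$, observe that even $\mathbf{M}$ annihilates the shaping term modulo $2$, and invoke Lemma~\ref{lemma_CRC_CF} to conclude membership in $\mathcal{C}_b$. The only cosmetic difference is that the paper splits the reduction into two mod-$2$ summands combined with $\oplus$ before arguing the shaping summand is all-zero, whereas you collapse it directly; the substance is identical.
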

\begin{proof}
    From \eqref{equ_latticeC_enc}, a lattice codeword $\mathbf{x}_i$ is encoded from $\mathbf{b}_i$ as:
    \begin{align}
        \mathbf{x}_i= \mathbf{G}_c \mathbf{b}_i- Q_{\Lambda_s}(\mathbf{G}_c \mathbf{b}_i)= \mathbf{G}_c \mathbf{b}_i- \mathbf{G}_s \mathbf{s}_i,
    \end{align}
    where $\mathbf{s}_i \in \mathbb{Z}^N$ so that $Q_{\Lambda_s}(\mathbf{G}_c \mathbf{b}_i)= \mathbf{G}_s \mathbf{s}_i$. The LSB vector $\hat{\mathbf{b}}_{LSB}$ is:
    \begin{align}
        \hat{\mathbf{b}}_{LSB} & =  \sum_{i= 1}^L a_i \mathbf{G}_c^{-1} \mathbf{x}_i  \bmod 2 \nonumber  \\
        %& = \left( \sum_{i= 1}^L a_i \mathbf{b}_i- a_i \mathbf{G}^{-1} \mathbf{G}_s \mathbf{s}_i \right) \bmod 2 \nonumber  \\
        & =  (\sum_{i= 1}^L a_i \mathbf{b}_i- a_i \mathbf{M} \mathbf{s}_i)  \bmod 2 \nonumber  \\
        & =  \left( \sum_{i= 1}^L a_i \mathbf{b}_i  \bmod 2 \right) \oplus \left( \sum_{i= 1}^L a_i \mathbf{M} \mathbf{s}_i  \bmod 2 \right),
    \end{align}
    where $\oplus$ is addition over the binary field.
    Lemma~\ref{lemma_CRC_CF} shows $\left( \sum_{i= 1}^L a_i \mathbf{b}_i \bmod 2\right) \in \mathcal{C}_b$. By linearity, $\hat{\mathbf{b}}_{LSB} \in \mathcal{C}_b$ if and only if $\left( \sum_{i= 1}^L a_i \mathbf{M} \mathbf{s}_i \bmod 2\right) \in \mathcal{C}_b$ for arbitrary $a_i \in \mathbb{Z}$ and $\mathbf{s}_i \in \mathbb{Z}^N$ for $i= 1, 2, \cdots, L$. If all elements of $\mathbf{M}$ are even integers, $\left( \sum_{i= 1}^L a_i \mathbf{M} \mathbf{s} \bmod 2\right)$ is the all-zero vector, indicating the membership of $\mathcal{C}_b$, which concludes the proof.
\end{proof}

\subsection{Probability of false positive on error detection} \label{sec_P_ud}
Error detection using finite dimensional $\mathcal{C}_b$ is imperfect and false positives may occur, that is for a transmitted $\mathbf{x} \in \Lambda'$, the event $\hat{\mathbf{x}} \in \Lambda'$ for $\hat{\mathbf{x}} \neq \mathbf{x}$. Such an event is referred to as an undetected error event which has probability $P_{ud}$. The following analysis is valid for evaluation of both LBC-embedded lattice $\Lambda'$ and lattice code $\mathcal{C}= \Lambda'/ \Lambda_s$, if the influence of codewords on codebook boundary can be ignored.

\begin{myDef} \label{def_p_ud}
    \rm (Probability of undetected error event) Given a desired decoding result $\mathbf{x} \in \Lambda'$ and $\hat{\mathbf{x}} \neq \mathbf{x}$, the probability of an undetected error event is:
    \begin{align} \label{equ_def_p_ud}
        P_{ud}= \Pr(\hat{\mathbf{x}} \in \Lambda' | \hat{\mathbf{x}} \neq \mathbf{x})= \frac{\Pr(\hat{\mathbf{x}} \in \Lambda', \hat{\mathbf{x}} \neq \mathbf{x})}{\Pr(\hat{\mathbf{x}} \neq \mathbf{x})}.
    \end{align}
\end{myDef}

Due to the linearity of lattices, $P_{ud}$ defined in \eqref{equ_def_p_ud} can be equivalently expressed as:
\begin{align} \label{equ_p_ud_2}
    P_{ud} = \frac{\sum_{\mathbf{e} \in \Lambda', \mathbf{e} \neq \mathbf{0}} p(\mathbf{e})}{\sum_{\mathbf{e} \in \Lambda, \mathbf{e} \neq \mathbf{0}} p(\mathbf{e})},
\end{align}
where $\mathbf{e}= \hat{\mathbf{x}}- \mathbf{x}$ is the error vector with probability $p(\mathbf{e})$.
Computing $p(\mathbf{e})$ exactly requires integrating the noise density function over $\mathcal{V}(\mathbf{e})$ which depends on the geometric properties of lattices and is impractical in most cases. Instead, we give two methods to estimate $P_{ud}$ using the kissing number and parity length of $\mathcal{C}_b$. 

\noindent \textbf{Method 1}: Denote $\mathcal{T}$ as the set of shortest non-zero vectors of $\Lambda$ and $|\{\cdot\}|$ as the cardinality of set. Kissing number is the number of shortest non-zero vectors, that is $|\mathcal{T}|$. Assume the error vector $\mathbf{e}$ is uniformly distributed over $\mathcal{T}$, then $P_{ud}$ can be estimated as:
\begin{align} \label{equ_p_ud_esti_kiss}
    P_{ud} \approx \frac{|\mathcal{T} \cap \Lambda'|}{|\mathcal{T}|}.
\end{align} 
\textbf{Method 2}: For $\mathcal{C}_b$ having $l$ parity bits, a fraction of $2^{-l}$ lattice points in $\Lambda$ have valid LSB vectors, which could cause an undetected error event. Then, $P_{ud}$ can be estimated as:
\begin{align} \label{equ_p_ud_esti_l}
    P_{ud} \approx 1/ 2^l.
\end{align}

Method 1 is more suitable for low dimensional lattices for which $\mathcal{T}$ is either known or can be found by some techniques such as list sphere decoding. The assumption on the distribution of $\mathbf{e}$ in Method 1 follows the truncated union bound thus is good for medium to high SNR with Gaussian noise. For higher dimensional lattices with unknown kissing number, method 2 gives a convenient estimate of $P_{ud}$. Table~\ref{table_p_ud} evaluates the value of $P_{ud}$ by Monte-Carlo simulation using \eqref{equ_def_p_ud} and the estimate by \eqref{equ_p_ud_esti_kiss} and \eqref{equ_p_ud_esti_l} using $BW_{16}$ lattice code with CRC of various lengths embedded. The CRC polynomials are selected to further minimize $P_{ud}$ in \eqref{equ_p_ud_esti_kiss} for given $l$. The numerical results show that both \eqref{equ_p_ud_esti_kiss} and \eqref{equ_p_ud_esti_l} give a good estimate of $P_{ud}$, whose value is dominated by the parity length $l$. Since the CRC polynomials are selected to minimize $P_{ud}$ in \eqref{equ_p_ud_esti_kiss} for each $l$, a slight lower $P_{ud}$ is obtained compared with the estimate using \eqref{equ_p_ud_esti_l}.

\begin{table*}[t] 
    \centering
    \caption{Evaluation of $P_{ud}$ of CRC-embedded $BW_{16}$ lattice codes and CRC length $l= 4, 5, 6, 7, 8$. All-zero codeword is assumed and SNR is set so that WER$\approx 10^{-3}$ for decoding using $\alpha_{MMSE}$.}
    \begin{tabular}{c|c|ccccc}
        \multicolumn{2}{c|}{CRC length($l$)}  & 4 & 5 & 6 & 7 & 8 \\
        \hline \hline
        \multirow{3}{*}{$P_{ud}$} & Monte-Carlo\eqref{equ_def_p_ud} & 5.619e-2 & 2.625e-2 & 1.205e-2 & 4.201e-3 & 1.386e-3 \\
        \cline{2-7}
        & Kissing number\eqref{equ_p_ud_esti_kiss} & 5.556e-2 & 2.593e-2 & 1.204e-2 & 4.167e-3 & 1.389e-3 \\
        \cline{2-7}
        & Parity length\eqref{equ_p_ud_esti_l} & 6.250e-2 & 3.125e-2 & 1.563e-2 & 7.813e-3 & 3.906e-3
    \end{tabular}
    
    \label{table_p_ud}
\end{table*}

{{}
\section{CRC length optimization} \label{sec_CRC_length_opt}
}

\begin{figure}[t]
    \centering
    \includegraphics[width=0.9\linewidth]{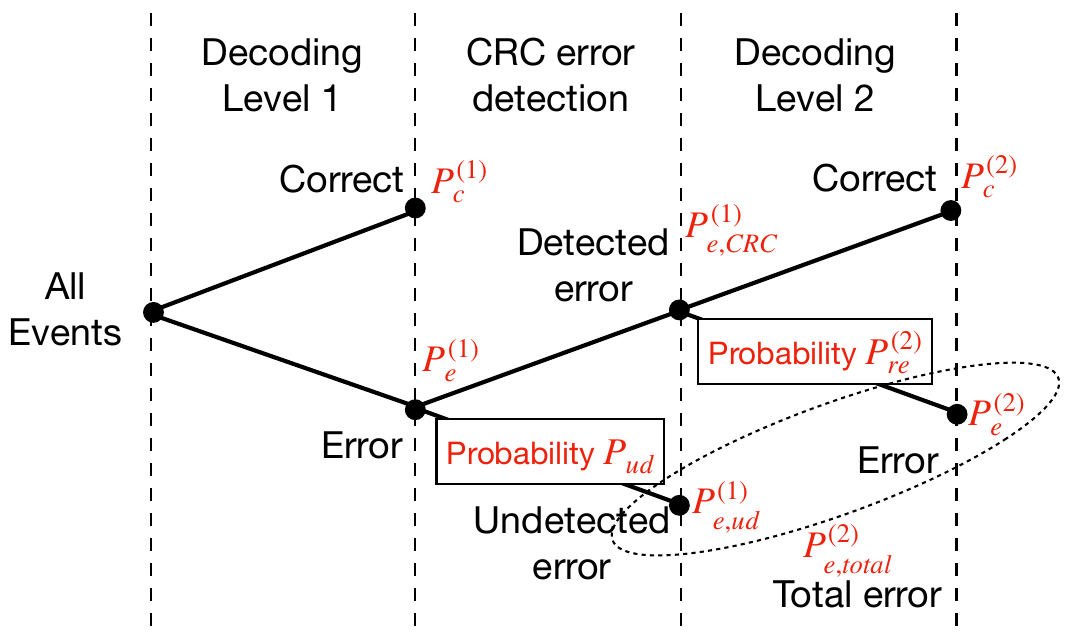}
    \caption{Events and probabilities for retry decoding with 2 levels.}
    \label{fig_error_event_retry_ana}

\end{figure}

{{}
Embedding longer CRC gives better error detection capability, while larger SNR penalty is suffered. This section considers this trade-off to balance error detection capability and SNR penalty. A semi-analytic CRC length optimization is derived to maximize the SNR gain, involving the SNR penalty, at a given target error rate. This optimization is valid for both single user transmission and CF relaying. 
}

Table~\ref{table_p_ud} showed that a longer CRC has the better error detection capability while a larger SNR penalty \eqref{equ_SNR_penalty} is suffered. The CRC length optimization balances this trade-off to maximize the gain for a target WER/EER. The optimization consists of two steps: first, estimate the WER/EER after $k$-level retry decoding, denoted as $P_{e, total}^{(k)}$, as a function of CRC length $l$; then obtain the gain from the WER/EER curve using the estimated $P_{e, total}^{(k)}$ with the SNR penalty included. The optimal $l$ gives the lowest SNR for a given target WER/EER. The $P_{e, total}^{(k)}$ is first derived as a function of the probability of undetected error $P_{ud}$ defined in \eqref{equ_def_p_ud}: 
\begin{align} \label{equ_P_e_total_func}
    f: P_{ud} \rightarrow P_{e, total}^{(k)}.
\end{align}
Using \eqref{equ_p_ud_2}-\eqref{equ_p_ud_esti_l}, define a mapping between CRC length $l$ and $P_{ud}$, $g: l \rightarrow P_{ud}$. Then $P_{e, total}^{(k)}$ given in \eqref{equ_P_e_total_func} is further written as:
\begin{align}
    f \circ g: l \rightarrow P_{e, total}^{(k)}.
\end{align}
Notations for deriving $f: P_{ud} \rightarrow P_{e, total}^{(k)}$ are defined as follows. Let $P_{e}^{(k)}$ be the word error probability at the $k$-th decoding level. After decoding, a CRC check splits these error events into detected and undetected errors with probability $P_{e, CRC}^{(k)}$ and $P_{e, ud}^{(k)}$, respectively, where events with probability $P_{e, ud}^{(k)}$ are not retried in future decoding. The total word error probability $P_{e, total}^{(k)}$ for $k>1$ is:
\begin{align} \label{equ_P_e_total_k}
    P_{e, total}^{(k)}= \sum_{i= 1}^{k- 1} P_{e, ud}^{(i)}+ P_{e}^{(k)},
\end{align} 
due to the mutually exclusivity of events.
For $k= 1$, $P_{e, total}^{(1)}= P_{e}^{(1)}$ is the word error probability of the one-shot decoder. For $i= 2, 3, \cdots, k$, another new term $P_{re}^{(i)}$ is defined to indicate the word error probability at the $i$-th decoding level given a detected error from level $i- 1$. 
Fig.~\ref{fig_error_event_retry_ana} illustrates an example of the structure of events and corresponding probabilities for a 2-level decoding.

\begin{myProposition} \label{prop_p_e_re}
    \rm  For decoding with $k> 1$ levels, given $P_{e}^{(1)}$ and $P_{re}^{(i)}$, for $i= 2, 3, \cdots, k$, of the CRC-embedded lattice code, the function $f: P_{ud} \rightarrow P_{e, total}^{(k)}$ is expressed as:
    \begin{align} \label{equ_P_e_re_estim_k}
        P_{e, total}^{(k)} & = \sum_{i= 1}^{k-1} \left( P_{e}^{(i)} P_{ud} \right)+ P_{re}^{(k)} P_{e}^{(k- 1)} (1- P_{ud}),
    \end{align}
    where $P_{e}^{(i)}$ for $i> 1$ is obtained recursively as:
    \begin{align} \label{equ_p_e_rec}
        P_{e}^{(i)}= P_{re}^{(i)} P_{e}^{(i-1)} (1- P_{ud}).
    \end{align}
\end{myProposition}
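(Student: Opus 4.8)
The plan is to reconstruct \eqref{equ_P_e_re_estim_k} from the event tree sketched in Fig.~\ref{fig_error_event_retry_ana}, starting from the already-established total-error expression \eqref{equ_P_e_total_k} and closing the recursion on $P_e^{(i)}$. The whole argument is a bookkeeping exercise over conditional probabilities, so the work is in setting up the per-level split correctly rather than in any nontrivial calculation.

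First I would relate the per-level split probabilities to $P_{ud}$. Since $P_{ud}$ is the probability that an error goes undetected given that an error occurred (Definition~\ref{def_p_ud}), and it is a property of the embedded code $\mathcal{C}_b$ and the base lattice rather than of the particular scaling coefficient, it is taken to be the same at every decoding level. Hence a word error at level $i$, occurring with probability $P_e^{(i)}$, splits into an undetected error with $P_{e,ud}^{(i)}= P_e^{(i)} P_{ud}$ and a detected error with $P_{e,CRC}^{(i)}= P_e^{(i)}(1- P_{ud})$. This single observation supplies every factor of $P_{ud}$ and $(1- P_{ud})$ that appears in the statement.

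Next I would establish the recursion \eqref{equ_p_e_rec}. The decoder advances to level $i$ only when level $i-1$ produced a \emph{detected} error, an event of probability $P_{e,CRC}^{(i-1)}= P_e^{(i-1)}(1- P_{ud})$; conditioned on reaching level $i$, the error probability is $P_{re}^{(i)}$ by definition. Multiplying these gives $P_e^{(i)}= P_{re}^{(i)} P_e^{(i-1)}(1- P_{ud})$, which is exactly \eqref{equ_p_e_rec}. To finish, I would substitute $P_{e,ud}^{(i)}= P_e^{(i)} P_{ud}$ into the sum of \eqref{equ_P_e_total_k} for $i= 1, \ldots, k-1$, and replace the final term $P_e^{(k)}$ by its recursive form $P_{re}^{(k)} P_e^{(k-1)}(1- P_{ud})$; this yields \eqref{equ_P_e_re_estim_k} directly.

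The main obstacle is conceptual rather than algebraic: I must justify that the events summed in \eqref{equ_P_e_total_k} are genuinely mutually exclusive and that $P_e^{(i)}$ is read as the \emph{unconditional} probability of both reaching level $i$ and erring there. I would argue that reaching level $i$ forces every earlier level to have produced a detected (not undetected) error, so the undetected-error events across levels $1, \ldots, k-1$ together with the terminal error at level $k$ partition the total-error event without overlap, exactly as the ``mutually exclusivity'' remark following \eqref{equ_P_e_total_k} asserts. I would also state explicitly that $P_{ud}$ is treated as level-invariant, since the derivation relies on the same factor $P_{ud}$ (and $1- P_{ud}$) being extracted at each level; if this assumption were relaxed one would instead carry a level-dependent $P_{ud}^{(i)}$ through the same steps.
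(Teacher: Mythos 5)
Your proposal is correct and follows essentially the same route as the paper's proof: split each level-$i$ error via $P_{e,ud}^{(i)}=P_e^{(i)}P_{ud}$ and $P_{e,CRC}^{(i)}=P_e^{(i)}(1-P_{ud})$, derive the recursion $P_e^{(i)}=P_{re}^{(i)}P_e^{(i-1)}(1-P_{ud})$ from the event tree, and substitute into \eqref{equ_P_e_total_k}. If anything, you are slightly more careful than the paper --- you use the correct index $P_{e,CRC}^{(i-1)}$ in the recursion (the paper's intermediate step writes $P_{e,CRC}^{(i)}$, an apparent typo) and make explicit the level-invariance of $P_{ud}$ and the mutual exclusivity that the paper only asserts in passing.
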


\begin{proof}
    By the structure shown in Fig.~\ref{fig_error_event_retry_ana}, $P_{e, CRC}^{(i)}$ can be estimated on average as $P_{e, CRC}^{(i)}= P_{e}^{(i)} (1- P_{ud})$, by which $P_{e}^{(i)} (i> 1)$ can be recursively obtained as: 
    \begin{align} \label{equ_p_e_rec_proof}
        P_e^{(i)} = P_{re}^{(i)} P_{e, CRC}^{(i)}=  P_{re}^{(i)} P_{e}^{(i-1)} (1- P_{ud}).
    \end{align}
    Similarly, we have $P_{e, ud}^{(i)}= P_{e}^{(i)} P_{ud}$. Substituting $P_{e, ud}^{(i)}$ and $P_e^{(i)}$ into \eqref{equ_P_e_total_k}, it is obtained that:
    \begin{align} \label{equ_p_e_total_fun}
        P_{e, total}^{(k)} = \sum_{i= 1}^{k-1} \left( P_{e}^{(i)} P_{ud} \right)+ P_{re}^{(k)} P_{e}^{(k- 1)} (1- P_{ud}).
    \end{align}
    With given $P_{e}^{(1)}$ and $P_{re}^{(i)}$ for $i= 2, 3, \cdots, k$, we can see \eqref{equ_p_e_total_fun} as a function of $P_{ud}$, that is, $f: P_{ud} \rightarrow P_{e, total}^{(k)}$.
\end{proof}

{{}
The CRC length optimization is a semi-analytical method because estimating using $f$ requires numerically evaluated $P_{e}^{(1)}$ and $P_{re}^{(i)}$, $i= 2, 3, \cdots, k$. Fortunately, the values of $P_{e}^{(1)}$ and $P_{re}^{(i)}$, for $i= 2, 3, \cdots, k$ depend on the value of decoding coefficients but are independent of the embedded CRC code, thus numerical evaluation over different CRC length $l$ is not needed. To justify this, recall that the lattice decoder uses the base lattice but not the CRC-embedded lattice. First, since $P_{e}^{(1)}$ is the word error probability of the conventional one-shot decoding so that it is independent of the embedded CRC code. The lowest $P_{e}^{(1)}$ is achieved by selecting the optimal decoding coefficient, such as $\alpha_{MMSE}$ for single user transmission. Second, $P_{re}^{(i)}$ is the transition probability between \emph{a detected error} with probability $P_{e, CRC}^{(i- 1)}$ and \emph{a decoding error of next decoding attempt} with probability $P_{e}^{(i)}$, see Fig.~\ref{fig_error_event_retry_ana} for $i= 2$. It can be seen that the measure of $P_{re}^{(i)}$ starts from an error that is already detected; passes through the retry decoding using the base lattice decoder; then ends at an error event before error detection which includes both detected and undetected errors at the $i$-th error detection. Since no knowledge of the CRC code is involved during this transition, the probability $P_{re}^{(i)}$ is independent of the embedded CRC codes, as well as the CRC length $l$. Similar to $P_{e}^{(1)}$, the value of $P_{re}^{(i)}$ also depends on the choice of retry decoding coefficient, see Fig.~\ref{fig_E8_alphasub_condi} as an example for single user transmission. 

Numerically, $P_{re}^{(i)}$ can be evaluated using
\begin{align}  \label{equ_P_re_k_CRC}
    P_{re}^{(i)}= P_e^{(i)}/ P_{e, CRC}^{(i- 1)}
\end{align}
by embedding arbitrary CRC code for error detection, or equivalently using
\begin{align} \label{equ_P_re_k_genie}
    P_{re}^{(i)}= P_e^{(i)}/ P_{e}^{(i- 1)}
\end{align}
by assuming genie-aided error detection with $P_{ud}= 0$. 

By letting $P_{ud}= 0$, the estimate of $P_{e, total}^{(k)}= P_{re}^{(k)} P_{e}^{(k- 1)}$ indicating the word error probability using genie-aided error detection, which gives a lower bound of WER/EER for $k$-level retry decoding with no SNR penalty included.
}

{{}
\section{Implementation of CRC-embedded lattice codes} \label{sec_implement}
This section considers implementation of the CRC-embedded lattice codes for both single user transmission and CF relaying scenarios with optimized CRC length. First, trade-offs related to implementation are discussed. Then numerical results are provided using $E_8$ and $BW_{16}$ lattice codes for single user transmission through the AWGN channel; and $N= 128, 256$ polar lattice codes for a 2-user CF relay through Rayleigh fading channel, respectively. Since channel values are random for CF relaying, scheme 1 in Section~\ref{sec_mac_CFrelay_dec_scheme} is applied as the retry decoding scheme.

\subsection{Trade-offs on implementation} \label{sec_implement_tradeoff}
In the previous section, we addressed the trade-off between CRC error detection capability and SNR penalty, and a CRC length optimization was given. Here we introduce some other trade-offs related to implementation of the proposed CRC-embedded lattices/lattice codes. 

The first trade-off is between lattice dimension and SNR penalty for fixed CRC length $l$ and code rate $R$ of the base lattice code $\mathcal{C}= \Lambda_c/\Lambda_s$, rather than the CRC-embedded lattice code. The SNR penalty in \eqref{equ_SNR_penalty} can be written as $SNR_p= 10 \log_{10} (1+ l/(NR- l))$. For lattice codes with different dimension $N$ but the same rate $R$, the SNR penalty decreases when $N$ increases. Also, by definition, the increase of SNR penalty is more significant for low dimensional lattice codes when adding 1 CRC bit. This implies that for low dimensional lattice codes, short CRC length is preferred to achieve small SNR penalty by sacrificing error detection capability. However, when $N$ increases, longer CRC can be applied to have better error detection capability without increasing too much SNR penalty.

The second trade-off is between code rate $R$ of the base lattice code $\mathcal{C}= \Lambda_c/\Lambda_s$ and the improvement of retry decoding for fixed dimension $N$ and CRC length $l$. When code rate $R$ increases, the fraction of the number of CRC parity bits over total number of bits is smaller so that SNR penalty is reduced. This implies that for an $N$-dimensional lattice code, one possible direction to increase the improvement of retry decoding is to expand the shaping region to increase the code rate.

Another notable trade-off is between lattice dimension and improvement of retry decoding. From a theoretical point of view, as $N \rightarrow \infty$, there exists a lattice code achieves the Gaussian channel capacity by using $\alpha_{MMSE}$ \cite{erez2004achieving} in single user transmission and a lattice code achieves Poltyrev's bound by using coefficient set $\{\mathbf{a}, \alpha\}$, which maximizes the computation rate \cite{nazer2011compute}, in CF relaying. This implies the improvement of retry decoding decreases as $N \rightarrow \infty$. For the single user case, as described in Section~\ref{sec_su}, the only noise component is the additive white Gaussian noise. As dimension $N$ increases, it is known that the probability density of Gaussian noise is concentrated in a thin annulus with radius $\sqrt{N \sigma^2}$ \cite{blum2020foundations}, i.e. exist $\epsilon > 0$ such that
\begin{align}
    \Pr\left(\sqrt{N \sigma^2}- \epsilon< \|\mathbf{z}\|< \sqrt{N \sigma^2}+ \epsilon \right) \rightarrow 1.
\end{align}
In single user transmission, by letting the Voronoi region $\mathcal{V}$ cover a sphere with radius $\sqrt{N \sigma^2}/ \alpha_{MMSE}$, the improvement of retry decoding decreases as $N$ increases. 

For CF relaying described in Section~\ref{sec_mac_CFrelay}, retry decoding changes the equivalent noise \eqref{equ_CF_message_equi_noise} in the way of: both the length and direction of integer approximation error term $\sum_{i= 1}^L (\alpha h_{i}- a_{i}) \mathbf{x}_i$, and the length of the Gaussian noise term $\alpha \mathbf{z}$ with direction unchanged. The change of direction of integer approximation error may lead to a vector cancellation against Gaussian noise to correct an error in actual decoding. 

Overall, the SNR penalty can be reduced by increasing lattice dimension $N$ or code rate $R$ of the base lattice code $\mathcal{C}= \Lambda_c/\Lambda_s$. However, even though lower SNR penalty can be achieved, increasing $N$ also reduces the improvement of retry decoding. It becomes important to select an appropriate dimension for implementing the CRC-embedded lattice codes.
Numerically, for single user transmission, improvement of retry decoding using the CRC-embedded lattices can be observed for low dimensional lattice codes and becomes marginal for medium dimensional lattice codes. For example, as shown in \cite[Fig.~7]{xue2022lower}, even though an SNR penalty is as small as 0.078dB by a $N= 128$ polar lattice code with CRC-4 embedded, the gain is still marginal using three $\alpha$ candidates for retry decoding. For CF relaying, due to the existence of integer approximation error term, the improvement of retry decoding can still be observed for medium dimensional lattice codes. To implement the CRC-embedded lattice codes, low dimension lattice codes, such as $E_8$ and $BW_{16}$ lattice codes, are considered in single user transmission; and medium dimensional lattice codes, such $N= 128, 256$ polar lattice codes, are considered in CF relaying. Since the SNR penalty is significant for low dimensional lattice codes, high code rates are also considered in single user case, which can be seen as high-order modulations in communications.
}

\subsection{Numerical results for single user transmission} \label{sec_implement_SU}
Next we give an implementation of retry decoding and CRC-embedded lattice codes for single user transmission. The decoding procedure follows Section~\ref{sec_SU_dec_scheme} using a list $\mathcal{A}_1, \mathcal{A}_2, \cdots, \mathcal{A}_k$. Here, CRC codes are applied for error detection instead of the genie used in the analysis, thus the SNR penalty is non-negligible. After each decoding attempt, the uncoded message $\hat{\mathbf{b}}$ is recovered by the indexing function in \eqref{equ_latticecode_index} from which the LSB vector is extracted for CRC check.

To estimate $P_{e, total}^{(k)}$, the value of $P_{e}^{(1)}$ is evaluated by one-shot decoding using $\alpha_{MMSE}$. By the recursive structure of retry decoding, $P_{e}^{(i)}$ can be calculated using $P(\alpha | e_{i- 1})$ defined in \eqref{equ_Prob_previous_failed} as:
\begin{align}
    P_e^{(i)} = P_{e, CRC}^{(i- 1)}- \sum_{j= 1}^{2^{i- 1}} P(\alpha_{i, j} | e_{i- 1}) P_{e, CRC}^{(i- 1)}.
\end{align}
By \eqref{equ_P_re_k_CRC}, $P_{re}^{(i)}$ can then be obtained without requiring extra numerical evaluation as:
\begin{align}
    P_{re}^{(i)}= 1- \sum_{j= 1}^{2^{i- 1}} P(\alpha_{i, j} | e_{i- 1}).
\end{align}

The numerical results are given using $E_8$ and $BW_{16}$ lattice codes with hypercube shaping. Since $E_8$ and $BW_{16}$ lattice codes have known kissing number, the CRC polynomials are selected to minimize $P_{ud}$ estimated using \eqref{equ_p_ud_esti_kiss}. With the SNR penalty in \eqref{equ_SNR_penalty} included, Fig.~\ref{fig_E8_BW16_Petotal_esti_simu} verifies the accuracy of the estimate using Proposition~\ref{prop_p_e_re} for single user transmission. For target WER$=10^{-5}$ and 2-level decoding, the best achievable gain and the optimized CRC length $l$ for $E_8$ and $BW_{16}$ lattice codes are shown in Table~\ref{table_E8_rec_CRC} and Table~\ref{table_BW16_rec_CRC}, where $R$ in tables are the code rates of based lattice code $\mathcal{C}= \Lambda_c/ \Lambda_s$ before CRC embedding. 
{{} Regarding to the first and second trade-offs discussed in Section~\ref{sec_implement_tradeoff}, the optimized CRC lengths are short, since low dimensional lattice codes are applied; while, high rate code are considered to reduce the SNR penalty to achieve larger gain.}
%Higher rate codes contain more bits per message implying lower SNR penalty and larger gain which agrees with the numerical results. 
When $R= 2,3,4$ for $E_8$ lattice codes and $R= 2.25$ for $BW_{16}$ lattice codes, the expected gain is less than 0 for which embedding parity bits and retry decoding are not recommended. 

\begin{figure}[t]
    \centering
    \includegraphics[width=0.9\linewidth]{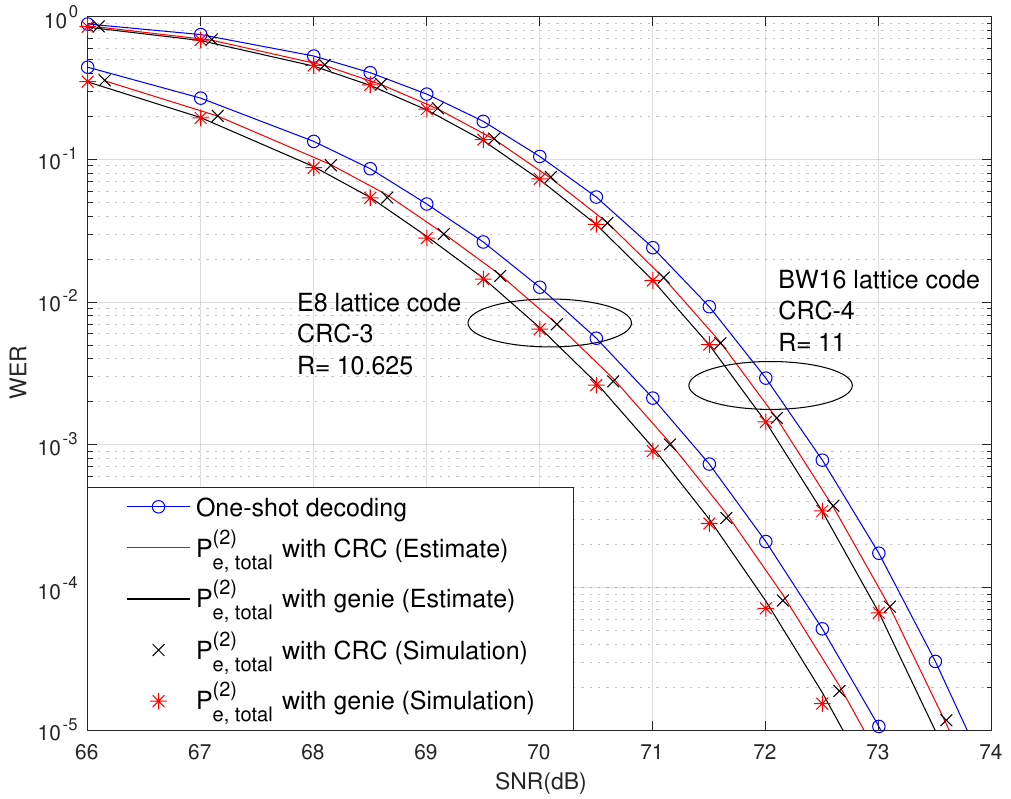}
    \caption{Estimated and actual $P_{e, total}^{(2)}$ for single user transmission using $E_8$ and $BW_{16}$ lattice codes. The CRC polynomials are $x^3+ x+ 1$ and $x^4+ x^3+ 1$, respectively. The total decoding level is 2 using $\mathcal{A}_1= \{\alpha_{MMSE}\}$, $\mathcal{A}_2= \{\alpha_{2, 1}, \alpha_{2, 2}\}$.}
    \label{fig_E8_BW16_Petotal_esti_simu}
\end{figure}

\begin{table}[t]
    \centering
    \caption{The expected gain and optimized CRC length $l$ for $E_8$ lattice code with hypercube shaping and 2 decoding levels.}
    \begin{tabular}{|c|c|c|c|}
        \hline
        $R$ & Gain (dB) & Optimized $l$ & CRC polynomial   \\ \hline
        2, 3, 4 & $< 0$  & -                  & -                                  \\ \hline
        5       & 0.0060 & 1                  & SPC code                           \\ \hline
        6       & 0.0352 & \multirow{2}{*}{2} & \multirow{2}{*}{$x^2+ 1$}          \\ \cline{1-2}
        7       & 0.0621 &                    &                                    \\ \hline
        8       & 0.0845 & \multirow{4}{*}{3} & \multirow{4}{*}{$x^3+ x+ 1$}       \\ \cline{1-2}
        9       & 0.1082 &                    &                                    \\ \cline{1-2}
        10      & 0.1270 &                    &                                    \\ \cline{1-2}
        11      & 0.1424 &                    &                                    \\ \hline
        -       & 0.3270 & \multicolumn{2}{|c|}{Upper bound on the gain.} \\ \hline
    \end{tabular}
    \label{table_E8_rec_CRC}
\end{table}

\begin{table}[t]
    \centering
    \caption{The expected gain and optimized CRC length for $BW_{16}$ lattice code with hypercube shaping and 2 decoding levels.}
    \begin{tabular}{|c|c|c|c|}
        \hline
        $R$ & Gain (dB) & Optimized $l$ & CRC polynomial  \\ \hline
        2.25  & $< 0$  & -                  & -                                   \\ \hline
        3.25  & 0.0197 & 1                  & SPC code                            \\ \hline
        4.25  & 0.0484 & 2                  & $x^2+ 1$                            \\ \hline
        5.25  & 0.0741 & \multirow{5}{*}{3} & \multirow{5}{*}{$x^3+ x^2+ 1$}      \\ \cline{1-2}
        6.25  & 0.0997 &                    &                                     \\ \cline{1-2}
        7.25  & 0.1182 &                    &                                     \\ \cline{1-2}
        8.25  & 0.1322 &                    &                                     \\ \cline{1-2}
        9.25  & 0.1431 &                    &                                     \\ \hline
        10.25 & 0.1528 & \multirow{2}{*}{4} &  \multirow{2}{*}{$x^4+ x^3+ 1$}     \\ \cline{1-2}
        11.25 & 0.1624 &                    &                                     \\ \hline
        -     & 0.2880 & \multicolumn{2}{|c|}{Upper bound on the gain.} \\ \hline
    \end{tabular}
    
    \label{table_BW16_rec_CRC}
\end{table}

\subsection{Numerical results for CF relaying}  \label{sec_implement_CF}
{{} For implementation in CF relaying using ICF, the Rayleigh fading channel is assumed. The decoding procedure follows scheme 1 in Section~\ref{sec_mac_CFrelay_dec_scheme} but a CRC code is applied for error detection instead of a genie. Let $\mathcal{C}= \Lambda_c/ \Lambda_s$ be the base lattice code before CRC embedding and $\mathbf{x}_1, \mathbf{x}_2 \in \mathcal{C}$. The received message is $\mathbf{y}= h_1 \mathbf{x}_1+ h_2 \mathbf{x}_2+ \mathbf{z}$, where the total channel gain is normalized to $\|\mathbf{h}\|= \sqrt{h_1^2+ h_2^2}= 1$, to keep a constant received SNR. Using ICF, the estimate of the linear combination is $\hat{\mathbf{x}} \in \Lambda_c$ (not necessarily in $\mathcal{C}$). The corresponding uncoded message is recovered by $\hat{\mathbf{b}}= \mathbf{G}_c^{-1} \hat{\mathbf{x}}$, from which the LSB vector is extracted for CRC check. To compute $P_{e, total}^{(k)}$ in \eqref{equ_P_e_re_estim_k} for CRC length optimization, $P_{ud}$ is estimated by \eqref{equ_p_ud_esti_l} as $P_{ud} \approx 2^{-l}$, and both $P_{e}^{(1)}$ and $P_{re}^{(i)}$, for $i= 2, 3, \cdots, k$, are numerically evaluated using Monte-Carlo method. The equation error rate (EER) of the linear combinations is measured at the relay, when $\hat{\mathbf{x}} \neq \sum_{i= 1}^L a_{i} \mathbf{x}_i$.

}

Numerical evaluation of retry decoding is given for a 2-user CF relay using ICF. Construction D polar lattice codes with $N= 128, 256$ and hypercube shaping are used for channel coding. The parameters for lattice design follow \cite{ludwiniananda2021design}. {{} The standard successive cancellation (SC) decoder is applied for the component polar codes. It is noticed that polar codes can use CRC codes for error detection in successive cancellation list (SCL) decoding to improve error performance compared with SC decoding \cite{tal2015list}. The CRC for SCL decoding performs error detection at the component binary codeword level; while the proposed CRC-embedded lattice code performs error detection at the lattice codeword level. Even though SCL achieves better error performance than SC on decoding polar codes, to avoid ambiguity between two types of CRC codes, the standard SC decoder is considered.} 

Fig.~\ref{fig_N128_256_opt_CRC4_esti_simu} verifies the accuracy of the estimate in Proposition~\ref{prop_p_e_re} for CF relaying with CRC-4 and genie-aided error detection, where two decoding attempts are assumed. For the genie-aided case, the users' messages $\mathbf{x}_i$ are available at relay and the error detection is performed by comparing $\hat{\mathbf{y}}$ with the true linear combination $\sum_{i= 1}^2 a_{i} \mathbf{x}_i$. At target EER $=10^{-5}$, gains of approximately $1.51$ dB and $1.18$ dB are observed with genie-aided error detection, indicating the upper bound on gain for retry decoding under the configuration above. Using CRC-embedded lattice codes, the expected gain for $N= 128, 256$ polar lattice codes and retry decoding with two attempts are shown in Fig.~\ref{fig_N128_256_CRClength_gain} for CRC length $l= 1, 2, \cdots, 16$. The maximum gains of approximately $1.31$ dB and $ 1.08$ dB are achieved when the CRC length are in the range of 8-9 and 9-11 for $N= 128$ and $256$, respectively. For longer CRC lengths, the gain decreases because the increasing SNR penalty overcomes the error performance improvement. %The upper bound on gain indicates the genie-aided decoding.

\begin{figure}[t]
    \centering
    \includegraphics[width=0.9\linewidth]{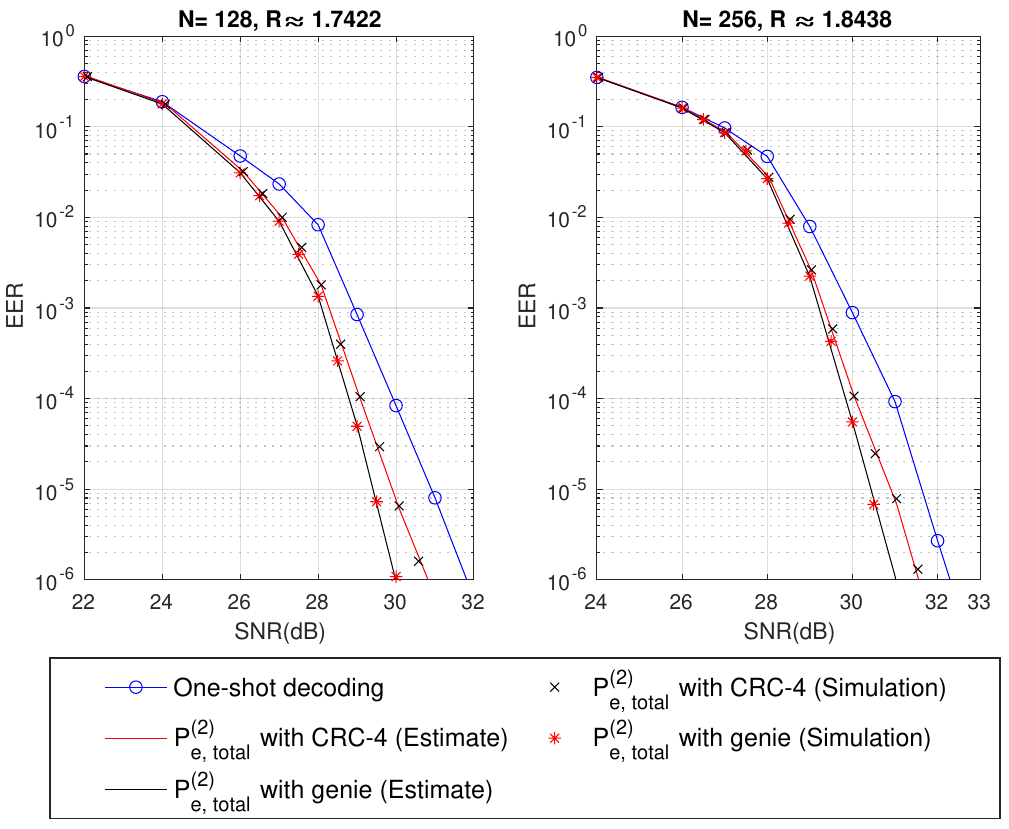}
    \caption{Estimate and simulation results of $P_{e, total}^{(2)}$ for 2-user CF relay using ICF with CRC-4 and genie-aided error detection. $N= 128, 256$ polar code lattice and hypercube shaping is used.}
    \label{fig_N128_256_opt_CRC4_esti_simu}
\end{figure}

\begin{figure}[t]
    \centering
    \includegraphics[width=0.9\linewidth]{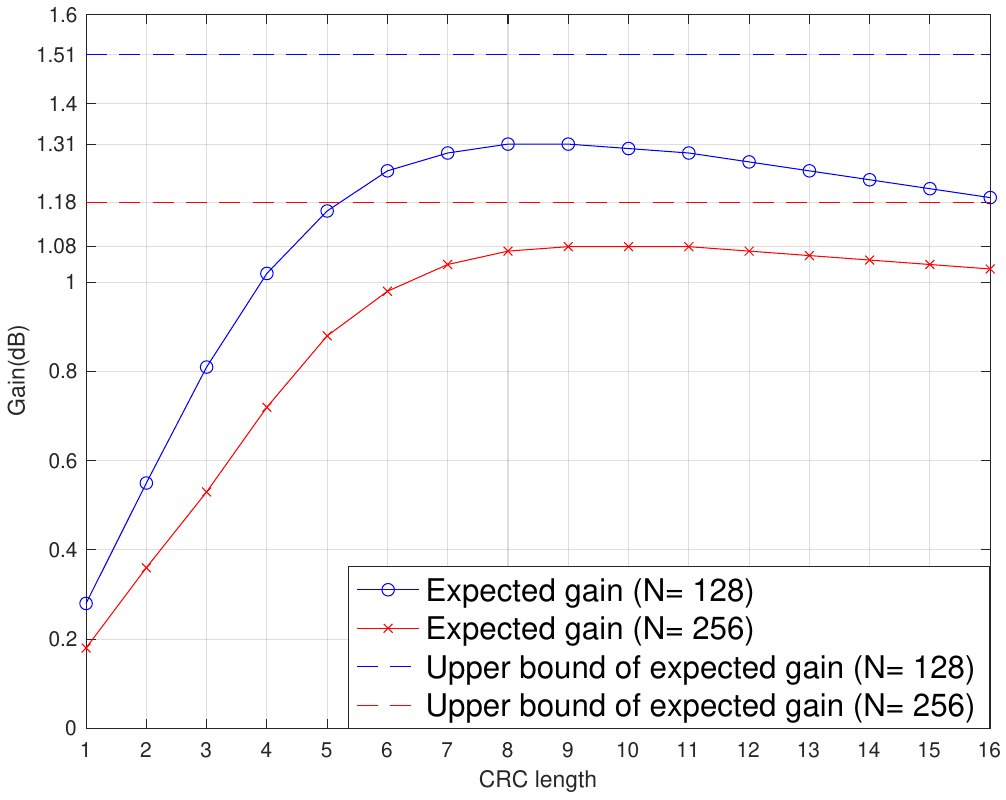}
    \caption{Expected gain for 2-user CF relay using $N= 128, 256$ polar code lattice with CRC length from 1 to 16. For $N= 128$, the maximum gain of $1.31$dB is achieved with CRC length being $8,9$; for $N= 256$, the maximum gain of $1.08$dB is achieved with CRC length being $9, 10, 11$.}
    \label{fig_N128_256_CRClength_gain}
\end{figure}

\section{Conclusions} \label{sec_conclusions}
This paper considered finite dimensional lattice-based communications for point-to-point single user transmission and multiple access relaying using compute-forward. It is shown that the proposed CRC-embedded lattice codes with retry decoding outperforms the conventional one-shot decoding with a lower decoding error rate. For CF relaying, this scheme is applicable if the relay is power unconstrained and the shaping lattice is appropriately designed, by which an error in linear combinations can be detected without requiring individual users' messages. For the aspect of practical lattice code design, a semi-analytical method to optimize the CRC length was introduced. The implementation examples illustrated that the proposed scheme can improve the error performance for both single user transmission and CF relaying using ICF, where the gains were more significant for CF relaying than the single user case. For a 2-user relay, gains up to $1.31$ dB and $1.08$ dB at EER$\approx 10^{-5}$ are observed by only adding one more decoding attempt for $N= 128, 256$ Construction D polar lattice codes, respectively, with the optimized CRC length. 

{{} The proposed retry decoding repeats lattice decoding for base lattice, rather than the CRC-embedded lattice, $k$ times, where $k$ is the number of decoding attempts and can be seen as a constant compared with the complexity of base lattice decoder. The overall complexity of decoding maintains the same level as the one-shot decoding which is dominated by the base lattice decoder. For small WER, such as $P_e< 10^{-5}$, most of codewords are successfully recovered in the first decoding attempt, resulting in the average number of decoding attempts being a value close to 1.}
In addition, error detection only requires $\bmod\ 2$ operations and a CRC check, thus lower time latency can be expected by applying CRC-embedded lattice codes and retry decoding, compared with requesting a re-transmission. Also, for CF relaying, if the error can be detected from linear combinations, the relay can stop forwarding erroneously decoded messages into the network. 

{{} Some potential applications and extensions of this work are considered. For single user transmission, we considered low dimensional lattice codes with high code rates in Section~\ref{sec_implement_SU}. Since lattice codes can be seen as a coded modulation scheme with shaping gain achieved, it can be considered as a competing scheme with QAM to reduce the constellation power. High rate lattice codes correspond to high-order modulations, such as 1024-QAM or 4096-QAM which are considered in fiber optic systems \cite{chen2015real}, the IEEE 802.3ax (Wi-Fi 6) \cite{WiFi6} and IEEE 802.3be (Wi-Fi 7) standards \cite{WiFi7}. While even higher order modulation is considered for future standards \cite{reshef2022future} suggesting the need for high rate codes.}
For CF relaying, first is to study potential applications of CRC-embedded lattice codes in practical scenarios, such as mobile communications and sensor networks. Considering the links connecting relay nodes and destination as wired back-haul links which have no transmit power constraint, the ICF scheme can be applied to achieve lower forwarding latency than conventional decode-forward. Meanwhile, CRC-embedded lattice codes provide physical layer error detection to reduce error rates and prevent forwarding erroneously decoded messages into networks. 
Another direction is to study the improvement of network throughput and error performance at the destination in a CF network, where retry decoding is performed at relays. In this case, the CF integer coefficients selection at the various relays should be considered in order to have a full rank coefficient matrix at the destination. {{}Additionally, MIMO integer-forcing (IF) receiver \cite{zhan2014integer} also applies lattice codes for PLNC which is extended from CF relaying. Unlike a standalone relay in CF network, the IF receiver has all linear combinations and can solve them locally, therefore it is possible to omit the $\bmod \Lambda_s$ operation during decoding. An application of the proposed CRC-embedded lattice codes with retry decoding to IF receiver can also be considered to reduce the error rate, as another extension of this work. }

\begin{comment}

\section*{Acknowledgments}
This should be a simple paragraph before the References to thank those individuals and institutions who have supported your work on this article.

{\appendix[Proof of the Zonklar Equations]
Use $\backslash${\tt{appendix}} if you have a single appendix:
Do not use $\backslash${\tt{section}} anymore after $\backslash${\tt{appendix}}, only $\backslash${\tt{section*}}.
If you have multiple appendixes use $\backslash${\tt{appendices}} then use $\backslash${\tt{section}} to start each appendix.
You must declare a $\backslash${\tt{section}} before using any $\backslash${\tt{subsection}} or using $\backslash${\tt{label}} ($\backslash${\tt{appendices}} by itself
 starts a section numbered zero.)}
    
\end{comment}

%{\appendices
%\section*{Proof of the First Zonklar Equation}
%Appendix one text goes here.
% You can choose not to have a title for an appendix if you want by leaving the argument blank
%\section*{Proof of the Second Zonklar Equation}
%Appendix two text goes here.}

\footnotesize
	\bibliographystyle{ieeetr}
	\bibliography{Reference}

\newpage

\begin{comment}

\section{Biography Section}
If you have an EPS/PDF photo (graphicx package needed), extra braces are
 needed around the contents of the optional argument to biography to prevent
 the LaTeX parser from getting confused when it sees the complicated
 $\backslash${\tt{includegraphics}} command within an optional argument. (You can create
 your own custom macro containing the $\backslash${\tt{includegraphics}} command to make things
 simpler here.)
 
\vspace{11pt}

\bf{If you include a photo:}\vspace{-33pt}
\begin{IEEEbiography}{Michael Shell}
Use $\backslash${\tt{begin\{IEEEbiography\}}} and then for the 1st argument use $\backslash${\tt{includegraphics}} to declare and link the author photo.
Use the author name as the 3rd argument followed by the biography text.
\end{IEEEbiography}

\vspace{11pt}

\bf{If you will not include a photo:}\vspace{-33pt}
\begin{IEEEbiographynophoto}{John Doe}
Use $\backslash${\tt{begin\{IEEEbiographynophoto\}}} and the author name as the argument followed by the biography text.
\end{IEEEbiographynophoto}
    
\end{comment}

%\newpage

%\setcounter{section}{0}
%\setcounter{figure}{89}
%\setcounter{myExp}{89}

%\emptythanks
%\title{Reply letter for TCOM-TLS-24-0312}
%\date{}
%\author{Jiajie Xue and Brian M. Kurkoski
        % <-this % stops a space
%}

%\maketitle

%\normalsize	

%\input{Reply letter for TCOM-TLS-24-0312.tex}

\vfill

\end{document}